%% file: main.tex
\pdfoutput=1
\documentclass[sigconf,screen,nonacm]{acmart}

\renewcommand\footnotetextcopyrightpermission[1]{}

\usepackage{booktabs}
\usepackage{amsmath}
\usepackage{mathtools}
\usepackage{algorithm}
\usepackage{algpseudocode}
\usepackage{tikz}
\usepackage{pgfplots}
\usetikzlibrary{arrows.meta,backgrounds,calc,decorations.pathreplacing,fit,positioning}
\pgfplotsset{compat=1.17}

\definecolor{FDdark}{RGB}{27,67,150}
\definecolor{FDred}{RGB}{234,50,35}
\definecolor{FDgray}{RGB}{98,104,109}
\definecolor{FDpale}{RGB}{242,243,244}

\newcommand{\hprcrst}{\textsc{HP-RCRST}}
\newcommand{\treecost}{L}
\newcommand{\treeradius}{R}
\newcommand{\mstcost}{M}
\newcommand{\directradius}{\Delta}

\title[Provably Good Prim--Dijkstra Revisited]{
\texorpdfstring{
Provably Good Prim--Dijkstra Revisited:\\
New Theory and a Practical Algorithm for a\\
Classical VLSI Routing Problem with LLMs
}{Provably Good Prim--Dijkstra Revisited: New Theory and a Practical Algorithm
for a Classical VLSI Routing Problem with LLMs}}

\author{Keren Zhu}
\email{krzhu@fudan.edu.cn}
\affiliation{
  \department{College of Integrated Circuits and Nano-Micro Electronics}
  \department{State Key Laboratory of Integrated Chips and System}
  \institution{Fudan University}
  \city{Shanghai}
  \country{China}
}

\begin{document}

\begin{abstract}
Large language models may make precise but dormant algorithmic problems
practical to revisit, and may expose new paths toward fundamental ones.  We
demonstrate this possibility through Prim--Dijkstra routing, a classic VLSI
problem whose terminal-only Manhattan complexity remained open despite
decades of practical work.  We prove weak NP-completeness, derive a continuous
cost--radius tradeoff with a balanced \((2,2)\) guarantee, and build
\hprcrst, a height-partition-based multi-mode solver.  On 28 development
instances, its stronger
modes Pareto-dominate the published-method union on 23 and tie on five.  The
case shows how conflicting conjectures, counterexamples, formal checks, and
implementation can reopen neglected questions.  Code and reproducibility
materials are available at \url{https://github.com/CODA-Team/hp-rcrst}.
\end{abstract}

\keywords{routing trees, Prim--Dijkstra, bicriteria approximation,
VLSI physical design, large language models}

\maketitle

\input{introduction}
\input{classical_problem}
\input{research_process}
\input{theory}
\input{algorithm}
\input{experiments}
\input{discussion}
\input{conclusion}

\bibliographystyle{ACM-Reference-Format}
\bibliography{references}

\clearpage
\appendix
\input{appendix_npc}
\input{appendix_height}
\input{appendix_algorithm}
\input{appendix_experiments}
\clearpage
\input{appendix_record}

\end{document}

%% file: introduction.tex
\section{Introduction}
\label{sec:introduction}

Some questions wait in the literature.  Others are put to work while they
wait.  Prim--Dijkstra routing has lived in both worlds for more than three
decades.

A routing tree may be short when viewed from the plane and long when viewed
from its source.  A minimum spanning tree minimizes total edge length but can
send a source--sink path through a long detour; a shortest-path tree minimizes
those paths but may pay for many direct connections.  Prim and Dijkstra occupy
the two ends of this tension.  Both are among the first greedy algorithms a
student encounters.  Putting them on the same dial creates a problem that can
be stated in a few lines and has remained part of physical-design research
ever since.  Figure~\ref{fig:mst-spt} shows the disagreement on a
small instance and the kind of tree sought between the two endpoints.

\begin{figure*}[t]
  \centering
  \begin{tikzpicture}[
      x=0.72cm,y=0.62cm,
      tree/.style={line width=0.95pt,draw=black!72},
      shortcut/.style={line width=1.55pt,draw=FDdark},
      terminal/.style={circle,draw=black!70,fill=white,
                       inner sep=0pt,minimum size=3.8pt},
      root/.style={rectangle,draw=black,fill=black,
                   inner sep=0pt,minimum size=5pt},
      target/.style={circle,draw=FDdark!78!black,fill=FDdark!14,
                     inner sep=0pt,minimum size=5pt},
      panel/.style={font=\small\bfseries},
      metric/.style={font=\small}
    ]
    \begin{scope}
      \draw[black!8,step=1] (-0.25,-2.25) grid (4.25,2.25);
      \draw[tree]
        (0,0)--(0,1)--(0,2)--(1,2)--(2,2)--(2,1)--(2,0)
        --(3,0)--(4,0)--(4,-1)--(4,-2)--(3,-2)--(2,-2);
      \foreach \p in {(0,1),(0,2),(1,2),(2,2),(2,1),(2,0),
                      (3,0),(4,0),(4,-1),(4,-2),(3,-2)}
        \node[terminal] at \p {};
      \node[root,label={[font=\footnotesize]below left:\(r\)}] at (0,0) {};
      \node[target,label={[font=\footnotesize]below:\(t\)}] at (2,-2) {};
      \node[panel] at (2,2.82) {Minimum spanning tree};
      \node[metric] at (2,-2.85)
        {\(\treecost=12,\quad \treeradius=12\)};
    \end{scope}

    \begin{scope}[shift={(7.3,0)}]
      \draw[black!8,step=1] (-0.25,-2.25) grid (4.25,2.25);
      \draw[tree] (0,0)--(0,1)--(0,2)--(1,2)--(2,2);
      \draw[tree]
        (2,1)--(2,0)--(3,0)--(4,0)--(4,-1)--(4,-2)--(3,-2)--(2,-2);
      \draw[shortcut] (0,0)--node[midway,above,font=\footnotesize,
        text=FDdark!82!black] {\(2\)} (2,0);
      \foreach \p in {(0,1),(0,2),(1,2),(2,2),(2,1),(2,0),
                      (3,0),(4,0),(4,-1),(4,-2),(3,-2)}
        \node[terminal] at \p {};
      \node[root,label={[font=\footnotesize]below left:\(r\)}] at (0,0) {};
      \node[target,label={[font=\footnotesize]below:\(t\)}] at (2,-2) {};
      \node[panel] at (2,2.82) {A cost--radius compromise};
      \node[metric] at (2,-2.85)
        {\(\treecost=13,\quad \treeradius=8\)};
    \end{scope}

    \begin{scope}[shift={(14.6,0)}]
      \draw[black!8,step=1] (-0.25,-2.25) grid (4.25,2.25);
      \draw[tree] (0,0)--(0,1)--(0,2)--(1,2)--(2,2);
      \draw[tree] (2,1)--(2,0)--(3,0)--(4,0)--(4,-1);
      \draw[tree] (2,-2)--(3,-2)--(4,-2);
      \draw[shortcut] (0,0)--node[midway,above,font=\footnotesize,
        text=FDdark!82!black] {\(2\)} (2,0);
      \draw[shortcut,rounded corners=1.5pt]
        (0,0)--node[midway,left,font=\footnotesize,
        text=FDdark!82!black] {\(4\)} (0,-2)--(2,-2);
      \foreach \p in {(0,1),(0,2),(1,2),(2,2),(2,1),(2,0),
                      (3,0),(4,0),(4,-1),(4,-2),(3,-2)}
        \node[terminal] at \p {};
      \node[root,label={[font=\footnotesize]below left:\(r\)}] at (0,0) {};
      \node[target,label={[font=\footnotesize]below:\(t\)}] at (2,-2) {};
      \node[panel] at (2,2.82) {Shortest-path tree};
      \node[metric] at (2,-2.85)
        {\(\treecost=16,\quad \treeradius=6=\directradius\)};
    \end{scope}
  \end{tikzpicture}
  \caption{The Prim--Dijkstra problem between its two textbook endpoints.
  The same 13 terminals are used in all three panels.  Unit chain edges form
  an MST with \((\treecost,\treeradius)=(12,12)\).  One Manhattan shortcut
  gives the middle tree \((13,8)\); a second produces an SPT \((16,6)\).
  Labeled heavy paths depict complete-metric edges with the shown Manhattan
  lengths; an unlabeled bend is not a Steiner vertex.  The cost--radius
  problem asks which middle-ground trees satisfy prescribed bounds on both
  quantities.}
  \Description{Three trees on the same double-hairpin arrangement of 13
  terminals.  The minimum spanning tree follows the full chain.  A compromise
  tree adds one shortcut from the root.  The shortest-path tree uses two root
  shortcuts, increasing total length while reducing source radius.}
  \label{fig:mst-spt}
\end{figure*}

The early performance-driven routing literature did not treat this tension as
mere engineering folklore.  In 1992, Cong et al.\ formulated bounded-radius
routing trees and developed algorithms that were, in the language of their
title, \emph{provably good}~\cite{CongEtAl1992ProvablyGood}.  Their paper also
left the complexity of the minimum-cost bounded-radius spanning tree in the
Manhattan plane as an open question.  Three years later, Alpert et al.\ joined
Prim and Dijkstra in a single parameterized construction~\cite{AlpertEtAl1995PrimDijkstra}.
The resulting PD family became a durable way to navigate the wirelength--delay
tradeoff.

The line did not end.  Shallow-light trees gave a general-graph guarantee by
changing the form of the question~\cite{KhullerEtAl1995Balancing}.  Allowing
Steiner vertices changes the attainable lightness, and geometric variants
continue to refine that distinction
~\cite{ElkinSolomon2015SteinerShallowLight,Solomon2015EuclideanSteiner,LeEtAl2026EuclideanSLT}.
SALT carried provable shallow-light reasoning into rectilinear Steiner
routing~\cite{ChenYoung2020SALT}.
In 2018, \emph{Prim--Dijkstra Revisited} showed that the elementary PD
construction still admitted substantial practical improvement~\cite{AlpertEtAl2018PDII}.
Recent multi-source variants moved the objective toward cost--skew
tradeoffs~\cite{KahngThumathyWoo2023}.  Across these changes, the same two
quantities kept returning: how much tree is built, and how far the tree makes
a signal travel.

Yet the original terminal-only Manhattan problem remained between two
communities.  For EDA it was familiar enough that a heuristic could be used
without first settling its exact complexity.  For theoretical computer
science it looked application-specific beside the general theory of
shallow-light trees.  The result was an unusual asymmetry: the construction
was repeatedly improved, but the most basic complexity question and a direct
global-radius approximation theory were not brought together with a modern
implementation.

No contemporary routing flow is waiting for this classification.  Its value
now lies in the class of problems it represents: precise, locally checkable
questions that slipped out of active view as communities and tools moved on.
Reopening such a question once required enough literature recovery, proof
work, and software reconstruction to make the attempt difficult to justify.
LLMs change that threshold.  More of these questions can now be revisited,
and some may reveal theoretical or practical opportunities that were never
visible when they were set aside.  The same parallel cycle of conjecture,
attack, formalization, and experiment may also open new routes into
fundamental problems.

Recent language-model research now touches two neighboring forms of inquiry.
In mathematics, OpenAI has released a GPT-generated proof of the cycle double
cover conjecture together with the prompt that produced
it~\cite{OpenAI2026CycleDoubleCover}, while formal-mathematics agents are
beginning to address open-ended research questions
~\cite{TaoEtAl2026SolversToResearch}.  AlphaGeometry couples learned guidance
with symbolic deduction, while LeanDojo couples language models to a
proof-assistant environment
~\cite{TrinhEtAl2024AlphaGeometry,YangEtAl2023LeanDojo}.
Evaluator-guided systems such as FunSearch and AlphaEvolve treat programs and
algorithms as objects of discovery
~\cite{RomeraParedesEtAl2024FunSearch,DeepMind2025AlphaEvolve}.  In EDA,
multi-agent self-evolution modifies, evaluates, and retains improvements to the
design tools themselves~\cite{YuRen2026SelfEvolvedABC}.

Mathematics and EDA approach the new machinery from opposite directions: one
through conjectures and proofs, the other through executable tools and
objective evaluators.  The old Prim--Dijkstra question sits between them.  It
is a theorem question embedded in a practical routing construction.  The
construction has been repeatedly used and improved; the complexity question
remains where it was left.

That position makes the problem unusually suitable for a model-assisted
revisit.  Its definition is short; candidate trees and counterexamples are
inexpensive to check; mathematical claims can be tested against executable
instances; and the implementation has a small, inspectable engineering
surface.  Proof validity, problem choice, and final research judgment remain
the author's responsibility.

For this revisit, we organized a set of deliberately conflicting investigations.
Some were asked to prove polynomial-time solvability without access to prior
answers; another was asked to establish hardness; others pursued approximation
and implementation with progressively broader information.  Failed claims
were preserved as counterexamples, and the empirical objective was frozen
before the final solver was developed.  The author selected the problem and
experimental conditions, reconstructed the arguments, performed the literature
due diligence, and verified the claims reported here.

Our contributions are as follows:

\begin{itemize}
  \item We settle the complexity question with an explicit integer reduction.
  It proves NP-completeness for the rooted, terminal-only Manhattan
  cost--radius problem.  The reduction is weak and numerical.  To the best of
  our knowledge, this answers the question left open in 1992.
  \item We give a bottom-up height partition of a rooted MST.  For a continuous
  parameter \(H\), it bounds global radius by \(\directradius+H\) and excess
  length through \(H\); the balanced choice gives a \((2,2)\) cost--radius
  guarantee.  The parameter \(H\) continues the sequence of knobs that has shaped
  this literature.
  \item We implement the theory in \hprcrst, a self-contained deterministic
  solver with one certified mode and three practical runtime--quality modes.
  On the 28 development instances, its stronger modes Pareto-dominate the
  published-method union on 23 cases and tie it on the other five; the full
  mode-by-mode comparison appears in Section~\ref{sec:experiments}.
  \item We release the five task specifications and information boundaries
  used in the study, together with the final solver and checker, selected
  benchmarks, baseline adaptations, frozen result bundles, and the completed
  sorry-free Lean modules.  The public artifact is a reproducibility subset
  of the research record, not a transcript of every model interaction.
\end{itemize}

%% file: classical_problem.tex
\section{A Classical Cost--Radius Problem}
\label{sec:classical}

\subsection{Model}

Let \(V\subset\mathbb{Z}^2\) be a set of distinct terminals and let
\(r\in V\) be the source.  Edges belong to the complete terminal graph and
have Manhattan length \(d(u,v)\).  For a spanning tree \(T\), let
\[
  \treecost(T)=\sum_{e\in T}d(e)
  \qquad\text{and}\qquad
  \treeradius(T)=\max_{v\in V}d_T(r,v).
\]
The decision problem asks whether a tree exists with
\(\treecost(T)\le B\) and \(\treeradius(T)\le D\).  We write
\(\mstcost\) for the exact terminal MST length and
\(\directradius=\max_v d(r,v)\).  Every feasible comparison tree satisfies
\(\treecost(T)\ge\mstcost\) and
\(\treeradius(T)\ge\directradius\).
We therefore summarize both objectives by the normalized performance vector
\[
  \boldsymbol{\rho}(T)
  :=
  \left(
    \frac{\treecost(T)}{\mstcost},
    \frac{\treeradius(T)}{\directradius}
  \right),
  \qquad
  \boldsymbol{\rho}(T)\succeq(1,1),
\]
where vector inequalities are componentwise and smaller is better.

\begin{example}[The running instance]
\label{ex:running}
For the 13 terminals in Figure~\ref{fig:mst-spt},
\(\mstcost=12\) and \(\directradius=6\).  The three displayed trees have
\[
  \boldsymbol{\rho}(T_{\rm MST})=(1,2),\qquad
  \boldsymbol{\rho}(T_{\rm mid})=
     \left(\frac{13}{12},\frac43\right),\qquad
  \boldsymbol{\rho}(T_{\rm SPT})=
     \left(\frac43,1\right).
\]
With budgets \(B=13\) and \(D=8\), the MST violates only the radius bound,
the SPT violates only the length bound, and the middle tree satisfies both.
Thus neither endpoint solves even this small rectangular decision instance.
\end{example}

This paper keeps three boundaries explicit.  Trees are terminal-only; the
radius is measured from a designated source rather than by all-pairs diameter;
and the same Manhattan metric determines both total cost and path length.
General weighted graphs, Euclidean geometry, Steiner points, per-terminal
stretch, and skew objectives are important neighbors, but they are not the
same problem.

\subsection{Constructions, knobs, and guarantees}

The two endpoints optimize one coordinate each:
\[
  L(T_{\rm MST})=M,
  \qquad
  R(T_{\rm SPT})=\Delta .
\]
\paragraph{Metric status.}
Every empirical point and every tree passed to our evaluator uses the complete
terminal graph with \(d(u,v)=\lVert u-v\rVert_1\).  The endpoint identities
and our height-partition theorem hold in any finite metric.  KRY/LAST is
stated more generally for a nonnegatively weighted connected graph, and hence
applies to this complete Manhattan graph.  BRBC, PD, and PD-II were developed
in the rectilinear VLSI-routing setting, while MSPD/MSS also changes the source
and skew objective.  When these methods enter our comparison, we retain only
their terminal parent trees and recompute \(L\) and \(R\) in the present
single-source Manhattan model.  A common metric therefore does not mean that
the original papers solve the same optimization problem.

KRY recasts the tradeoff as a light approximate shortest-path tree
(LAST)~\cite{KhullerEtAl1995Balancing}.  For every \(\alpha>1\), it constructs
a tree \(T_\alpha\) satisfying
\[
  d_{T_\alpha}(r,v)\leq\alpha\,d(r,v)\quad(\forall v),
  \qquad
  L(T_\alpha)\leq
  \left(1+\frac{2}{\alpha-1}\right)M.
\]
Consequently,
\[
  \boldsymbol{\rho}(T_\alpha)
  \preceq
  \left(1+\frac{2}{\alpha-1},\,\alpha\right).
\]
The symmetric setting is
\(\alpha=1+\sqrt2\), giving the common factor \(1+\sqrt2\).
This is stronger than a global-radius statement on its path coordinate: it
controls every terminal relative to that terminal's own shortest distance.

The EDA constructions use their knobs differently.  The bounded-radius,
bounded-cost (BRBC) family starts from an MST traversal and inserts root
connections when a radius threshold is crossed
~\cite{CongEtAl1992ProvablyGood}.  Prim--Dijkstra (PD) instead grows one tree
with the key
\[
  \operatorname{key}_{\lambda}(u,v)
    = d_T(r,u)+\lambda\,d(u,v),\qquad \lambda\geq1,
\]
moving from Dijkstra-like path growth toward Prim-like short edges as
\(\lambda\) increases~\cite{AlpertEtAl1995PrimDijkstra}.  PD-II uses separate
parameters for construction and detour repair, then applies edge flipping
~\cite{AlpertEtAl2018PDII}; MSPD/MSS further varies the source set and targets
cost--skew rather than the exact rectangle studied here
~\cite{KahngThumathyWoo2023}.  These parameter values are not interchangeable,
and PD and PD-II do not inherit the LAST bound above.

Our parameter \(H\) returns to a direct global cost--radius statement:
\[
  \boldsymbol{\rho}(A_H)
  \preceq
  \left(1+\frac{\Delta}{H},\,1+\frac{H}{\Delta}\right),
  \qquad
  H=\Delta\ \Longrightarrow\ \boldsymbol{\rho}(A_H)\preceq(2,2).
\]
Thus KRY/LAST and \hprcrst\ trace the envelopes
\((x-1)(y-1)=2\) and \((x-1)(y-1)=1\), respectively, in the normalized
cost--radius plane.  The second curve is tighter for global radius; it does
not replace KRY's stronger per-terminal stretch theorem.

\begin{figure}[t]
  \centering
  \begin{tikzpicture}
    \begin{axis}[
        width=\columnwidth,height=5.7cm,
        xmin=0.95,xmax=3.65,ymin=0.95,ymax=3.65,
        xlabel={normalized cost \(L/M\)},
        ylabel={normalized radius \(R/\Delta\)},
        xtick={1,2,3},ytick={1,2,3},
        tick label style={font=\footnotesize},
        label style={font=\small},
        axis line style={black!45},
        grid=both,grid style={black!7},
        legend style={
          at={(0.5,-0.29)},anchor=north,
          legend columns=2,draw=none,
          font=\footnotesize,column sep=7pt,
          cells={anchor=west}
        },
        clip=false
      ]
      \addplot[black!50,dashed,line width=1.0pt,domain=1.76:3.6,
               samples=90] {1+2/(x-1)};
      \addlegendentry{KRY/LAST}

      \addplot[FDdark,line width=1.4pt,domain=1.38:3.6,
               samples=90] {1+1/(x-1)};
      \addlegendentry{\hprcrst\ (\(H\))}
      \addplot[forget plot,only marks,mark=square*,mark size=2.3pt,
               mark options={fill=FDdark,draw=FDdark!90!black}]
        coordinates {(3,1.5) (2,2) (1.5,3)};
      \node[font=\footnotesize,anchor=south west,text=FDdark!90!black,
            fill=white,fill opacity=0.86,text opacity=1,inner sep=1pt]
        at (axis cs:2.06,2.08) {\((2,2)\)};

      \addplot[black!58,densely dotted,line width=0.9pt,
               mark=o,mark size=2.0pt,
               mark options={fill=white,draw=black!70}]
        coordinates {
          (1.18,3.32) (1.42,2.88) (1.72,2.47)
          (2.08,2.06) (2.52,1.63) (3.25,1.16)
        };
      \addlegendentry{PD (\(\lambda\))}

      \draw[-{Stealth[length=3.2pt]},FDred!78!black,line width=0.9pt]
        (axis cs:1.42,2.88)--(axis cs:1.57,2.64);
      \draw[-{Stealth[length=3.2pt]},FDred!78!black,line width=0.9pt]
        (axis cs:2.08,2.06)--(axis cs:1.90,2.22);
      \draw[-{Stealth[length=3.2pt]},FDred!78!black,line width=0.9pt]
        (axis cs:2.52,1.63)--(axis cs:2.72,1.51);
      \addplot[only marks,mark=diamond*,mark size=2.2pt,
               mark options={fill=FDred!70,draw=FDred!90!black}]
        coordinates {(1.57,2.64) (1.90,2.22) (2.72,1.51)};
      \addlegendentry{PD-II (\(\alpha_1,\alpha_2\))}

      \node[font=\footnotesize,anchor=north east,text=black!50]
        at (axis cs:3.55,3.55) {\(\swarrow\) better};
    \end{axis}
  \end{tikzpicture}
  \caption{The knob lineage on one normalized cost--radius plane.  The
  KRY/LAST and \hprcrst\ curves are worst-case upper-bound envelopes from
  the displayed bounds, not empirical frontiers.  Hollow points and red repair arrows
  show only the roles of the PD and PD-II parameters; their plotted positions
  are schematic and carry neither a numerical claim nor a method ranking.}
  \Description{A normalized cost-radius plot. A dashed KRY curve lies above
  a blue HP-RCRST curve through the point 2,2. Schematic hollow PD points run
  between the MST and shortest-path ends, and red arrows show PD-II repairs
  toward lower cost and radius.}
  \label{fig:knob-lineage}
\end{figure}

\subsection{Neighboring models}

The related theory follows several parallel routes.  In general graphs,
delay-constrained spanning trees were shown hard and studied through network
heuristics, mathematical programming, and bi-objective optimization
~\cite{SalamaEtAl1997DelayConstrained,GouveiaEtAl2008RootedDistance,CarvalhoCoco2023Biobjective}.
Those formulations permit an arbitrary input graph and, in several cases,
separate cost and delay weights.  They therefore do not settle the complete
terminal graph in which one Manhattan metric supplies both quantities.

Geometric variants also impose different path constraints.  Bounded-path
spanning and Steiner trees have been treated by exact and heuristic methods
~\cite{OhEtAl1997BoundedPath}, and single-source dilation bounds compare each
root path with that terminal's direct Euclidean distance
~\cite{CheongLee2013SingleSource}.  Diameter and minimum-radius variants supply
further hardness results
~\cite{HoEtAl1991MinimumDiameter,SeoEtAl2009Geometric}.  A per-terminal
multiplicative dilation bound, an all-pairs diameter bound, and our single
global additive budget \(D\) define distinct feasible sets.

A second neighboring line replaces the pair of hard bounds by a scalar
cost--distance objective: total construction cost plus weighted source--sink
path lengths.  It begins with two-metric network design
~\cite{MeyersonEtAl2008CostDistance}; uniform variants admit progressively
sharper approximations
~\cite{KhazraeiHeld2021UniformCostDistance,FoosEtAl2023UniformCostDistance}.
Related EDA work embeds timing, congestion, buffering, or multi-net resource
sharing into larger optimization systems
~\cite{HeldEtAl2018TimingRouting,DaboulEtAl2023GlobalInterconnect,HeldPerner2025CostDistance}.
These models explain the practical reach of the same cost--delay tension, but
their weighted-sum objectives, Steiner vertices, routing graphs, and global
design constraints are not the terminal-only Manhattan decision problem
studied here.  Thus the results above narrow, rather than remove, the novelty
of our theorem: its role is to close that precise case and connect the answer
to a particularly simple approximation.

%% file: research_process.tex
\section{Revisiting the Problem with Language Models}
\label{sec:process}

We used the same model---OpenAI GPT-5.6 Sol in Codex with ultra reasoning
effort---under five deliberately incompatible research conditions.  What
changed was the premise, the information boundary, and the object that had to
survive external checking:

\begin{description}
  \item[Hardness.] Starting from the problem definition alone, construct and
  adversarially audit an NP-completeness proof.
  \item[Exact, blind.] Pursue a polynomial-time exact algorithm without
  literature, prior algorithms, Internet access, or permission to retreat to
  a hardness claim.
  \item[Exact, informed.] Pursue the same conclusion with local papers and
  baseline code, but without access to the other investigations.
  \item[Approximation.] Assume hardness, consult the literature, and seek both
  a provable bicriteria guarantee and a strong empirical frontier.
  \item[Synthesis.] Inspect the accumulated record and build the final solver
  against an evaluator frozen before its implementation.
\end{description}

The two exact conditions provide a useful control.  Despite their different
information, both arrived at essentially the same architecture:
\[
  \begin{gathered}
    \text{blind exact}\quad\parallel\quad\text{informed exact}\\
    \Downarrow\\[-1mm]
    \text{parent-arc master + lazy connectivity/radius cuts}\\
    \Downarrow\\[-1mm]
    \text{same gap: rounds, pivots, and rational bit growth}
  \end{gathered}
\]
Polynomial-time separation of the individual cuts did not give a
polynomial-time exact algorithm.  A concrete scalarization witness was the
three-point frontier
\[
  \begin{aligned}
    \mathcal F&=\{(51,20),\ \boxed{(42,23)},\ (34,24)\},\\
    (42,23)\in\arg\min_{(L,R)\in\mathcal F}(L+\lambda R)
      &\ \Longrightarrow\ \lambda\leq3\ \land\ \lambda\geq8.
  \end{aligned}
\]
Thus the boxed Pareto point is unsupported.  The blind condition found an
analogous witness, and both conditions found dominated local-exchange optima.
We retained the formulations as small-instance oracles and the counterexamples
as boundaries; neither exact condition was reported as a proof that the
problem is in P.

\begin{example}[A counterexample that changed the proof object]
\label{ex:broom-counterexample}
The approximation condition conjectured that the safe minimum of two
\(\alpha=2\) LAST-style endpoint policies always contains a factor-two tree.
The claim survived exhaustive and randomized testing, but an \(n=58\)
subdivided broom has
\[
  \begin{aligned}
    M&=57, & \directradius&=9, & (L,R)&=(117,12),\\
    \max\!\left\{\frac{L}{M},\frac{R}{\directradius}\right\}
      &=\frac{39}{19}>2.
  \end{aligned}
\]
The witness redirected the proof from per-terminal stretch to the global
maximum radius used by the objective.  Charging whole residual components
instead yields
\[
  R(A_H)\leq \directradius+H,\qquad
  \bigl(L(A_H)-M\bigr)H\leq M\directradius .
\]
These are the height-partition inequalities of
Theorem~\ref{thm:height-partition}, not a repaired version of the falsified
invariant.
\end{example}

\paragraph{Synthesis continued the approximation route.}
The final condition did not assemble all preceding solvers.  Its inherited
spine and its additions were:

\begin{description}
  \item[Inherited.] Classical anchors, height and centered partitions, exact
  objective reconstruction, the frozen evaluator, and portfolio comparison.
  \item[Added.] Bounded reparenting and component exchange, cap repair,
  separately budgeted primary and companion search branches,
  deterministic portfolio selection, and size- and geometry-dependent
  dispatch.
  \item[Not imported.] The two rational cutting-plane engines, whose
  uncontrolled exact-search tails conflicted with the production role.
\end{description}

The implementation was largely rewritten, but the line of attack continued:
the \textsc{certified} mode preserves the theorem-bearing construction,
while the empirical modes begin from the same pool and spend additional work
on frontier coverage.

Each condition maintained claim and counterexample ledgers.  Candidate
theorems were converted into explicit inequalities, candidate algorithms were
checked on exact small instances, and negative witnesses were passed forward
instead of discarded.  The author selected the problem and information
boundaries, reconstructed the proofs, checked the literature boundary, fixed
the evaluator, and owns the final claims.

The public repository is a reproducibility subset rather than a transcript.
It releases the five task specifications, final solver and checkers, selected
benchmarks and baseline adaptations, frozen results, and completed Lean
modules.  It does not release complete conversations, private references,
unfinished formalizations, or every exploratory ledger and program.
Appendix~\ref{app:record} gives the exact inventory and formalization boundary.

%% file: theory.tex
\section{New Theory}
\label{sec:theory}

The two theoretical results address different questions.  The first explains
why exact optimization has resisted the sequence of simple constructions
reviewed in Section~\ref{sec:classical}.  The second shows that hardness does
not prevent a particularly simple global cost--radius guarantee.

\subsection{NP-completeness via a weak reduction}

\begin{theorem}[Complexity of rectilinear RCRST]
\label{thm:npc}
The rooted, terminal-only Manhattan cost--radius spanning-tree decision
problem is NP-complete under polynomial-time many-one reductions.  The
reduction establishes weak NP-hardness; it does not establish strong
NP-hardness.
\end{theorem}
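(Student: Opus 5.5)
Membership in NP is immediate.  A certificate is a parent array on \(V\).  In polynomial time we check that it forms a spanning tree rooted at \(r\), compute \(\treecost(T)\) as a sum of integer Manhattan lengths, compute every root distance \(d_T(r,v)\) by one traversal, and compare the results with \(B\) and \(D\).  All numbers are sums of at most \(|V|\) input-size integers, so their bit lengths stay polynomial.

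For hardness we reduce from a weakly NP-complete numerical problem.  Since the theorem claims only weak hardness, we use \textsc{Subset Sum} (or \textsc{Partition}) with integers \(a_1,\dots,a_n\) and target \(S\), encoded in binary.  The gadget we have in mind places each item as a small cluster of terminals far from the root \(r\); the cleanest version is a pair of points per item.  Each cluster has exactly two sensible ways to attach.
\begin{itemize}
  \item \emph{Cheap mode:} attach through a chain to a neighbour.  This saves wire but adds path length.
  \item \emph{Direct mode:} attach on a near-shortest path from \(r\).  This costs extra wire proportional to \(a_i\) and saves radius.
\end{itemize}
Concretely, we would place item gadgets along a line, for example the positive \(x\)-axis, at coordinates chosen so that the gaps encode the \(a_i\).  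The MST is then a chain, whose radius accumulates the gaps.  Each "shortcut" from \(r\) (or from a backbone) to the start of a segment adds cost depending on \(a_i\), and cuts the radius of the downstream terminals by an amount depending on \(a_i\).  This is the mechanism visible in Figure~\ref{fig:mst-spt}.  We then set \(B=\mstcost+\sum_i c_i+S'\) and \(D\) so that both budgets are tight exactly when the chosen shortcuts sum to \(S\): the cost bound enforces \(\sum_{i\in I}a_i\le S\), and the radius bound enforces \(\sum_{i\in I}a_i\ge S\).  A single global radius bound only sees the worst terminal.  We would therefore organize the gadgets in parallel, as separate spokes or "brooms" in different directions from \(r\), with one long designated spoke whose depth is controlled additively by the collective choice.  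Alternatively, we would pad every spoke so that the radius of the far terminal equals a fixed offset minus \(\sum_{i\in I}a_i\).

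The correctness argument has two directions.  The forward direction (subset \(\Rightarrow\) tree) is a direct construction and computation.  The backward direction is the main obstacle, since we must show that every tree meeting \(B\) and \(D\) is in canonical form.  We would scale the \(a_i\) by a large factor \(K\) and add \(O(1)\) gadget separations.  With that spacing, any non-canonical edge, such as a diagonal jump between gadgets or a partial attachment, costs at least \(K\) more than some canonical alternative without reducing the radius.  An exchange argument then converts any feasible tree into a canonical one of no greater cost or radius.  The Manhattan metric matters here: we would use collinear and L-shaped placements so that triangle-inequality ties (paths that are equally short via an intermediate terminal) are controlled.  In particular, we must rule out routes to a far terminal that pass through other gadgets at zero detour.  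We expect most of the effort to go into this case analysis.

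Finally, all coordinates are polynomial in \(n\) and \(\sum\log a_i\) bits, so the reduction is polynomial time.  However, the coordinates are exponentially large in unary, because they are proportional to the \(a_i\).  Hence the reduction gives only weak NP-hardness.  This matches the statement: a pseudo-polynomial algorithm is not excluded, and strong NP-hardness is not claimed.
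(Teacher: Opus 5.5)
The hardness half of your proposal has a genuine gap. Your NP-membership argument and the weak-hardness classification are fine. But the hardness part is a list of candidate designs rather than a reduction. No coordinates are fixed, and the budgets are placeholders ($B=\mstcost+\sum_i c_i+S'$, with $D$ chosen ``so that both budgets are tight''). The reverse direction, which you rightly call the crux, is deferred to an unspecified exchange argument.

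Some of the listed designs also cannot work. With shortcuts from $r$ to segment starts, or with independent spokes, the radius is a maximum of per-gadget quantities, as you yourself note. So the radius budget constrains each gadget separately and cannot enforce $\sum_{i\in I}a_i\ge S$. The additive encoding needs the variant you mention only in passing: one far terminal whose root path runs through every gadget in series. Your cheap/direct dichotomy is exactly right for that design.

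The paper uses blocks $s_{i-1},u_i,s_i$ along the $x$-axis, with $s_i=(2hi,0)$, $u_i=(2hi-3a_i,a_i)$, $z=(X,0)$, $h=10S$, $X=2hn+4S$, and budgets $B=X+3S$, $D=X+S$. Routing the $s_0$--$z$ path through $u_i$ saves $2a_i$ of wire and adds $2a_i$ of path length. A canonical tree therefore has $\treecost=X+4S-2A$ and $d_T(s_0,z)=X+2A$. The final gap of width $4S$ gives the off-path leaves $u_i$, whose extra depth is at most $4a_i$, room within $D$.

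Your scaling heuristic for the reverse direction points the wrong way. Multiplying the $a_i$ by $K$ multiplies the budget slack by $K$ too, so a penalty of order $K$ does not exclude a non-canonical edge. The separation must come from inter-gadget spacing that is large compared with the total slack. A bicriteria exchange would also have to show that canonicalization increases neither $\treecost$ nor any root distance, over every edge of the complete Manhattan graph, and you give no such argument.

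The paper needs no exchange at all:
\begin{itemize}
\item It adds the budgets, $F=\treecost(T)+d_T(s_0,z)\le B+D=2X+4S$, and counts crossings of vertical gaps.
\item The tree and the $s_0$--$z$ path each cross every gap at least once, which already uses $2X$ of horizontal length.
\item A third crossing of an interblock gap (width at least $17S$) violates the bound. So does a third crossing of the final gap (width $4S$), because some $u_i$ forces strictly positive vertical cost.
\item Hence each interblock cut is crossed by exactly one tree edge, and that edge lies on the path.
\item Nested-cut uniqueness and edge counting then force consecutive bridges and every internal edge $u_is_i$. This gives $\treecost(T)=W+4S$ and $d_T(s_0,z)=W+4A$.
\item The remaining freedom, in bridge endpoints and in the direction $u_is_i$ is traversed, is absorbed by the telescoping inequality $W+2A\ge X$.
\item The cost budget then gives $A\ge S/2$, and the radius budget gives $A\le S/2$.
\end{itemize}
Without an argument of this kind, quantifying over all complete-graph edges, the proposal does not establish NP-hardness.
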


Membership in NP follows from an \(n-1\)-edge certificate.  Exact binary
arithmetic gives every edge length and total cost, while one rooted traversal
computes all source distances and the radius.

For hardness, reduce from positive-integer \textsc{Partition}.  Given
\(a_1,\ldots,a_n\), let
\[
  S=\sum_{i=1}^n a_i,\qquad h=10S,\qquad t=4S,
  \qquad X=2hn+t .
\]
Create axis terminals and upper terminals
\[
  s_i=(2hi,0)\quad(0\leq i\leq n),\qquad
  u_i=(2hi-3a_i,a_i)\quad(1\leq i\leq n),
\]
together with \(z=(X,0)\); the root is \(s_0\).  Set
\[
  B=X+3S,\qquad D=X+S .
\]
The strict left-to-right order is
\(s_0,u_1,s_1,\ldots,u_n,s_n,z\), because every long interblock
gap is at least \(17S\).

If a subset \(I\) has sum \(A=S/2\), each selected block uses the path
\(s_{i-1}u_is_i\), while each unselected block uses
\(s_{i-1}s_i\) and attaches \(u_i\) as a leaf.  The resulting tree has
\[
  L=X+4S-2A=B,\qquad
  d_T(s_0,z)=X+2A=D ,
\]
and the off-path upper terminals also lie within radius \(D\).

The reverse implication must rule out every edge of the complete Manhattan
graph, not only the displayed local edges.  Let \(P\) be the
\(s_0\)-to-\(z\) path of an arbitrary feasible tree and put
\[
  F=L(T)+d_T(s_0,z)\leq B+D=2X+4S.                 \tag{1}
\]
Across every vertical gap in the terminal order, connectivity contributes one
tree crossing and \(P\) contributes one path crossing.  Their horizontal
contribution already totals \(2X\).  A third crossing of a nonfinal
interblock gap costs more than the remaining \(4S\); a third crossing of the
final \(4S\) gap leaves no room for the strictly positive vertical cost.
Every interblock prefix cut therefore has one tree crossing, which is also
the unique path crossing.  Nested-cut uniqueness forbids a bridge from
skipping a block, and edge counting forces every internal edge \(u_is_i\).

Let \(W\) be the total bridge length and let \(A\) now denote the sum of
items whose internal edge lies on \(P\).  The forced topology gives
\[
  L(T)=W+4S,\qquad d_T(s_0,z)=W+4A.                \tag{2}
\]
Horizontal and vertical telescoping along \(P\), including internal edges
traversed in either direction, yields the bridge inequality
\[
  W+2A\geq X.                                      \tag{3}
\]
Equations (2)--(3) and \(L(T)\leq B\) imply \(A\geq S/2\), while
(2)--(3) and \(d_T(s_0,z)\leq D\) imply \(A\leq S/2\).
Thus \(A=S/2\), recovering a partition.  Appendix~\ref{app:npc}
gives the complete cut and bit-complexity arguments.

\subsection{A bottom-up height partition}

Root an MST \(T_M\) at \(r\).  For a residual component whose top vertex is
\(v\), let \(h_v\) be its greatest retained tree distance from \(v\).
Algorithm~\ref{alg:height-partition} keeps the parent edge of \(v\) exactly
when doing so preserves residual height at most \(H\).  Otherwise it cuts that
edge and reconnects \(v\) directly to the root.

\begin{algorithm}[t]
  \caption{\textsc{HeightPartition}\((T_M,r,H)\)}
  \label{alg:height-partition}
  \begin{algorithmic}[1]
    \Require Rooted MST \(T_M\), root \(r\), threshold \(H>0\)
    \Ensure A terminal spanning tree \(A_H\)
    \State \(A_H\gets T_M\); \(h_v\gets0\) for every \(v\in V\)
    \ForAll{\(v\in V\setminus\{r\}\) in postorder}
      \State \(u\gets\operatorname{parent}_{T_M}(v)\);
             \(e\gets d(u,v)\)
      \If{\(e+h_v\leq H\)}
        \State \(h_u\gets\max\{h_u,e+h_v\}\)
      \Else
        \State \(A_H\gets A_H-\{(u,v)\}+\{(r,v)\}\)
      \EndIf
    \EndFor
    \State \Return \(A_H\)
  \end{algorithmic}
\end{algorithm}

\begin{theorem}[Global cost--radius tradeoff]
\label{thm:height-partition}
For every real \(H>0\), Algorithm~\ref{alg:height-partition} returns a
terminal tree \(A_H\) satisfying
\[
  R(A_H)\leq\directradius+H,\qquad
  \bigl(L(A_H)-\mstcost\bigr)H\leq\mstcost\directradius.       \tag{4}
\]
It runs in linear time after the MST has been rooted.
\end{theorem}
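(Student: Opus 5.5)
The proof analyses the final edge set directly instead of the sequence of edge swaps. When $v$ is processed, earlier swaps inside its subtree have already added root edges, so "remove $(u,v)$, add $(r,v)$" is not obviously a tree exchange at that moment.

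\emph{Structure.} Call $v\neq r$ a \emph{cut vertex} if the else-branch fires at $v$. The retained MST edges (all edges $(\operatorname{parent}(v),v)$ for non-cut $v$) form a forest $\mathcal C$. Each component $C\in\mathcal C$ has a unique top vertex, its shallowest vertex in $T_M$. That top is either $r$ or a cut vertex, because the top's parent edge is absent exactly when it is cut. $A_H$ consists of this forest plus one edge $(r,v)$ per cut vertex $v$. Adding that edge attaches the component topped by $v$ to the component containing $r$. So $A_H$ is connected with $n-1$ edges, hence a spanning tree. A cut vertex whose parent is $r$ gives the harmless swap $(r,v)\to(r,v)$.

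\emph{Postorder invariant.} When $v$ is reached in postorder, all its children have been processed. By induction, $h_v$ then equals the greatest retained-tree distance from $v$ to a vertex of its residual component below it. Each kept child $c$ contributes $d(v,c)+h_c$, and a kept child satisfies $d(v,c)+h_c\le H$. Hence every residual component has height at most $H$ from its top, including the component of $r$.

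Getting this bookkeeping right is the step I expect to need the most care. In particular, $h_v$ must be final before $v$'s own parent test, and cut children must not contribute to it.

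\emph{Radius.} Take $w$ in the component topped by $v$. If $v=r$, then $d_{A_H}(r,w)\le H$. Otherwise $d_{A_H}(r,w)=d(r,v)+d_{C}(v,w)\le\directradius+H$.

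\emph{Cost.} Each cut swaps $d(u,v)$ for $d(r,v)\le\directradius$, so $L(A_H)-\mstcost\le k\directradius$, where $k$ is the number of cuts. Now charge each cut vertex $v$ with the set consisting of its cut edge $(u,v)$ together with all edges of its residual component.
\begin{itemize}
\item These sets are pairwise disjoint subsets of $T_M$. Residual components are disjoint, and each cut edge lies in no component and belongs to exactly one cut vertex.
\item Each set has length at least $d(u,v)+h_v>H$, by the cut condition.
\end{itemize}
Summing over cut vertices gives $kH<\mstcost$ whenever $k\ge1$. Hence $\bigl(L(A_H)-\mstcost\bigr)H\le k\directradius H\le\mstcost\directradius$. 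If $k=0$, the left side is $0$.

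\emph{Running time.} There is one postorder pass with $O(1)$ work per vertex, so the algorithm runs in linear time once $T_M$ is rooted.
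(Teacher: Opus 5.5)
Your proof is correct and follows the paper's argument. It uses the same postorder residual-height invariant, the same radius decomposition into one root edge of length at most \(\Delta\) plus a residual path of length at most \(H\), and the same edge-disjoint charging sets \(\{(u,v)\}\cup E(C_v)\), each of weight exceeding \(H\). The differences are only in bookkeeping: you bound \(L(A_H)-M\le k\Delta\) and \(kH<M\) separately instead of summing the per-cut inequality \(a_vH\le\Delta\,w(Q_v)\), and you check the tree property on the final edge set rather than swap by swap; this also shows that the MST cut property, which the paper invokes only to get \(a_v\ge0\), is not needed for the bound.
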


\begin{proof}[Proof sketch]
Every residual component has height at most \(H\).  The root component
therefore has radius at most \(H\); every other component first uses one
direct root edge of length at most \(\directradius\), proving the radius
bound.

Consider a cut child \(v\), its deleted parent edge of weight \(e_v\), and the
total weight \(W_v\) retained in its residual component.  MST cut optimality
gives
\[
  0\leq d(r,v)-e_v\leq\directradius .
\]
The failed keep test and \(W_v\geq h_v\) give
\(e_v+W_v>H\).  Charge the shortcut's net addition to these edges:
\[
  \bigl(d(r,v)-e_v\bigr)H
     \leq\directradius(e_v+W_v).                   \tag{5}
\]
Once a residual is cut, none of its edges enters an ancestor residual, so the
charged edge sets are disjoint and have total weight at most \(\mstcost\).
Summing (5) proves the length inequality.  Each residual component is
reconnected exactly once, so the result is a tree.  The full argument appears
in Appendix~\ref{app:height}.
\end{proof}

Equivalently,
\[
  \boldsymbol{\rho}(A_H)
  \preceq
  \left(1+\frac{\directradius}{H},
        1+\frac{H}{\directradius}\right),
  \qquad
  \boxed{\boldsymbol{\rho}(A_{\directradius})\preceq(2,2)}.   \tag{6}
\]
The integer implementation samples the continuous envelope in (6).  The
balanced choice \(H=\directradius\) gives a single tree whose cost and radius
are each within two of their independent lower bounds.

\begin{figure*}[t]
  \centering
  \begin{minipage}[c]{0.53\textwidth}
    \centering
    \begin{tikzpicture}[
        x=1cm,y=0.82cm,
        vertex/.style={circle,draw=black!70,fill=white,
                       minimum size=5.2mm,inner sep=0pt,font=\footnotesize},
        root/.style={rectangle,draw=black,fill=black,text=white,
                     minimum size=5.2mm,inner sep=0pt,font=\footnotesize},
        kept/.style={line width=1.05pt,draw=black!72},
        deleted/.style={densely dotted,line width=1pt,draw=FDred!82},
        newedge/.style={line width=1.45pt,draw=FDdark},
        component/.style={draw=black!18,fill=black!2,rounded corners=4pt,
                          inner sep=6pt},
        lab/.style={font=\footnotesize,fill=white,inner sep=1pt}
      ]
      \node[root] (r) at (0,0) {\(r\)};
      \node[vertex] (u) at (0,1.25) {\(u\)};
      \node[vertex] (x) at (-1.15,2.45) {\(x\)};
      \node[vertex] (v) at (1.35,2.35) {\(v\)};
      \node[vertex] (a) at (0.72,3.75) {\(a\)};
      \node[vertex] (b) at (2.05,3.75) {\(b\)};

      \begin{scope}[on background layer]
        \node[component,fit=(r)(u)(x)] {};
        \node[component,fit=(v)(a)(b)] {};
      \end{scope}

      \draw[kept] (r)--node[lab,left] {\(2\)} (u);
      \draw[kept] (u)--node[lab,above left] {\(1\)} (x);
      \draw[kept] (v)--node[lab,left] {\(3\)} (a);
      \draw[kept] (v)--node[lab,right] {\(1\)} (b);
      \draw[deleted] (u)--node[lab,sloped,above,pos=.36,
        text=FDred!90!black] {\(e=2\)} (v);
      \node[font=\footnotesize,text=FDred!90!black] at (0.72,1.78) {\(\times\)};
      \draw[newedge] (r) to[bend right=27]
        node[lab,sloped,below,text=FDdark!82!black]
        {\(d(r,v)\leq\directradius\)} (v);

      \draw[decorate,decoration={brace,amplitude=3pt},black!55]
        (-1.58,-0.02)--node[left=4pt,font=\footnotesize,align=center]
        {residual\\height \(\leq H\)} (-1.58,2.47);
      \draw[decorate,decoration={brace,amplitude=3pt,mirror},black!55]
        (2.48,2.33)--node[right=4pt,font=\footnotesize,align=center]
        {\(h_v=3\leq H\)} (2.48,3.78);

      \node[font=\small\bfseries] at (0.45,4.55)
        {Bottom-up partition (\(H=4\))};
      \node[font=\footnotesize] at (0.45,-0.72)
        {\(e+h_v=2+3=5>H=4\ \Longrightarrow\ (u,v)\mapsto(r,v)\)};
    \end{tikzpicture}
  \end{minipage}\hfill
  \begin{minipage}[c]{0.43\textwidth}
    \centering
    \begin{tikzpicture}
      \begin{axis}[
          width=\linewidth,height=5.05cm,
          xmin=1.18,xmax=3.25,ymin=1.18,ymax=3.25,
          axis lines=left,
          xlabel={\(\bar\ell\) (length bound)},
          ylabel={\(\bar r\) (radius bound)},
          xlabel style={font=\small},
          ylabel style={font=\small},
          tick label style={font=\footnotesize},
          xtick={1.5,2,3},
          ytick={1.5,2,3},
          clip=false
        ]
        \addplot[FDdark,line width=1.2pt,domain=1.33:3.2,samples=120]
          {1+1/(x-1)};
        \addplot[only marks,mark=*,mark size=1.9pt,
          mark options={draw=black!55,fill=black!55}]
          coordinates {(3,1.5) (1.5,3)};
        \addplot[only marks,mark=*,mark size=3pt,
          mark options={draw=FDdark!75!black,fill=FDdark}]
          coordinates {(2,2)};
        \draw[densely dotted,black!35]
          (axis cs:2,1.18)--(axis cs:2,2)--(axis cs:1.18,2);
        \node[font=\footnotesize,anchor=south west,text=FDdark!82!black]
          at (axis cs:2.03,2.03) {\(H=\directradius:\ (2,2)\)};
        \node[font=\footnotesize,anchor=north east] at (axis cs:2.97,1.47)
          {\(H=\directradius/2\)};
        \node[font=\footnotesize,anchor=south west] at (axis cs:1.53,3.02)
          {\(H=2\directradius\)};
        \node[font=\footnotesize,anchor=south west,text=black!55]
          at (axis cs:1.22,1.22) {\((1,1)\) ideal};
        \draw[-{Latex[length=1.5mm]},black!55]
          (axis cs:1.52,1.55)--(axis cs:1.30,1.32);
        \node[font=\small\bfseries] at (axis cs:2.24,3.48)
          {Certified tradeoff};
      \end{axis}
    \end{tikzpicture}
  \end{minipage}
  \caption{Bottom-up height partition and its certified tradeoff.  The dotted
  red edge fails the residual-height test and is replaced by the blue root
  edge (left).  The curve is the coordinatewise upper-bound envelope in
  (6), not an empirical Pareto frontier (right).  Color supplements line
  weight and style; the construction remains legible in grayscale.}
  \Description{Left: a rooted weighted MST is partitioned bottom-up when a
  child residual would exceed height H, and the separated component root is
  joined directly to the source. Right: a hyperbolic certified tradeoff curve
  between normalized length and radius bounds, highlighting the balanced
  two-by-two point.}
  \label{fig:height-tradeoff}
\end{figure*}

\subsection{The factor-two boundary}

Theorem~\ref{thm:height-partition} uses only metricity and MST optimality.
That level of generality cannot yield a universal common factor below two for
a single returned tree: for every \(c<2\), an explicit finite metric admits
no tree \(T\) with both \(L(T)\leq c\mstcost\) and
\(R(T)\leq c\directradius\).  Appendix~\ref{app:height} gives the
trunk-and-arms construction.

This boundary is deliberately narrow.  The construction is not embedded in
the Manhattan plane, and it does not rule out a bounded portfolio whose
different members cover different comparison trees.  A below-two planar
\(L_1\) guarantee and a below-two approximate-Pareto portfolio theorem remain
open.

%% file: algorithm.tex
\section{\hprcrst: A Certified Core and Empirical Frontiers}
\label{sec:algorithm}

The theory suggests a separation that is useful in practice.  One small
construction family can carry a universal certificate; a larger search can
then improve the measured frontier without being asked to justify that
certificate.  \hprcrst\ makes this separation explicit.  It accepts a rooted
terminal set and returns a bounded portfolio of parent arrays, rather than a
single tree for one preselected scalarization.

Its algorithmic spine comes from the preceding approximation investigation:
classical shallow-light anchors (trees returned directly by named
constructions), the height-partition and centered families, exact
remeasurement, and a portfolio objective.  The final implementation rewrites
that spine and extends it with bounded reparenting, component exchange, cap
repair, and size-dependent dispatch.  Its independent search branches each
have their own seed rule, operators, and finite budget.  The earlier exact
cutting-plane engines are not part of the production solver.

\subsection{One exact boundary for every candidate}

A candidate is represented by one parent for each nonroot terminal.  Before a
candidate may enter a portfolio, a common validation path checks that the
parent relation is connected and acyclic, recomputes every Manhattan edge and
root distance, and obtains \(L\) and \(R\) from the reconstructed tree.
Incremental move formulas are used to propose trees, never to certify their
reported objectives.  Metric duplicates and dominated points are then
removed with exact integer comparisons.

This boundary serves two purposes.  It prevents a fast local update from
silently changing the problem being measured, and it lets heterogeneous
constructions be combined without trusting their internal bookkeeping.
Products used by certificates and ratio comparisons are evaluated with
unbounded integers; the supported signed-coordinate domain and overflow
argument appear in Appendix~\ref{app:algorithm}.

\paragraph{Five implementation terms.}
An \emph{anchor} is a tree built directly by a named construction, such as
MST, star, or LAST.  A \emph{seed} is a validated candidate handed to a local
improvement operator.  A \emph{search branch} is one choice of seed rule,
operators, and finite budget.  In the centered fixed-MST family, each
component keeps its MST edges and a nonroot component attaches directly from
the root to one chosen center.  The primary branch enables that family; the
companion branch disables it and keeps a separate budget.  Their validated
outputs are united only after both branches finish.  The nondominated result
is a \emph{frontier}; the returned
\emph{portfolio} contains at most \(k\) selected trees from that frontier.

\begin{algorithm}[t]
  \caption{\textsc{CertifiedCore}\((V,r)\)}
  \label{alg:hprcrst-core}
  \begin{algorithmic}[1]
    \Require Terminals \(V\), root \(r\)
    \State \(T_M\gets\textsc{DensePrim}(V)\);
           \(\directradius\gets\max_{v\in V}d(r,v)\)
    \State \(A_{\directradius}\gets
      \textsc{HeightPartition}(T_M,r,\directradius)\)
      \Comment{Theorem~\ref{thm:height-partition}}
    \State \(\mathcal C\gets
      \{T_M,\textsc{Star}(V,r),A_{\directradius}\}\)
    \State \Return
      \((\mathcal C,T_M,\directradius,A_{\directradius})\)
  \end{algorithmic}
\end{algorithm}

\begin{algorithm}[t]
  \caption{\textsc{HP-RCRST}\((V,r,\mu,k)\): mode extension and selection}
  \label{alg:hprcrst}
  \begin{algorithmic}[1]
    \Require Terminals \(V\), root \(r\), mode \(\mu\), portfolio cap \(k\)
    \State \((\mathcal C,T_M,\directradius,A_{\directradius})
      \gets\textsc{CertifiedCore}(V,r)\)
    \If{\(\mu=\textsc{certified}\)}
      \State \(\mathcal C\gets\mathcal C\cup
        \{\textsc{HeightPartition}(T_M,r,H):
          H\in\{\directradius/4,\directradius/2,
                 2\directradius,4\directradius\}\}\)
    \Else
      \State \(\mathcal C_0\gets\mathcal C\cup
        \textsc{ClassicalAnchors}(V,r,\mu)\)
      \ForAll{independently budgeted search branches \(B\) enabled by \(\mu\)}
        \State \(\mathcal C_B\gets\mathcal C_0\)
        \If{\(\textsc{CenteredEnabled}(B)\)}
          \State \(\mathcal C_B\gets\mathcal C_B\cup
            \textsc{CenteredPartitions}(T_M,r,\mu)\)
        \EndIf
        \If{\(\mu\in\{\textsc{balanced},\textsc{quality}\}\)
            and \(|V|\leq8\)}
          \State \(\mathcal C_B\gets\mathcal C_B\cup
            \textsc{ExactTreeFrontier}(V,r)\)
        \EndIf
        \State \(\mathcal S\gets\textsc{SelectSeeds}(\mathcal C_B,B)\)
        \State \(\mathcal C_B\gets\mathcal C_B\cup
          \textsc{Improve}(\mathcal S,B)\)
        \State \(\mathcal C\gets\mathcal C\cup
          \textsc{Nondominate}(\textsc{Validate}(\mathcal C_B))\)
      \EndFor
    \EndIf
    \State \(\mathcal F\gets
      \textsc{Nondominate}(\textsc{Validate}(\mathcal C))\)
    \If{\(\mu=\textsc{certified}\)}
      \State \Return
        \(\textsc{CertifiedSelect}(\mathcal F,A_{\directradius},k)\)
    \Else
      \State \Return \(\textsc{PortfolioSelect}(\mathcal F,k,\mu)\)
    \EndIf
  \end{algorithmic}
\end{algorithm}

Algorithms~\ref{alg:hprcrst-core} and~\ref{alg:hprcrst} separate the shared
theorem-derived core from mode-specific work.  The source contains more
constructions than the pseudocode names individually, but every one enters
through the same validate--nondominate boundary.  The exact schedules and
cutoffs are recorded in Appendix~\ref{app:algorithm} and in the released
artifact.

\subsection{The certified portfolio}

The \textsc{certified} mode constructs a canonical dense-Prim MST, the
star, and height partitions at five logarithmically spaced thresholds.  The
\(H=\directradius\) member is essential.  If another returned tree
Pareto-dominates it, the dominator may carry the certificate; portfolio
truncation is otherwise forbidden to remove the witness.  With \(k=1\), the
exact \(H=\directradius\) parent array is returned.

The dense MST costs \(O(n^2)\) time and \(O(n)\) auxiliary memory, and each
postorder partition is linear.  Thus the complete certified portfolio remains
deterministic and polynomial:
\[
  \exists T\in\mathcal P_{\rm cert}:
  \qquad L(T)\leq2\mstcost,\qquad R(T)\leq2\directradius.       \tag{7}
\]
Immediately before serialization, \hprcrst\ independently rebuilds the MST
and the balanced partition and rechecks both inequalities in (7).

\subsection{Exact small instances and bounded improvement}

The empirical modes begin with several deliberately different candidate
families.
Classical star, MST, PD, BRBC, and KRY/LAST constructions cover recognizable
endpoint and shallow-light behaviors.  Radial-label, cap-repair, and
critical-height constructions add topologies not obtained by a single
Prim--Dijkstra parameter sweep.  These anchors are useful both as returned
points and as starting trees for later moves.

Two exact mechanisms are retained only on bounded domains.  The first is
\textsc{ExactTreeFrontier} for \(n\leq8\).  A Pr\"ufer code is a sequence of
\(n-2\) terminal indices, with repetition allowed.  It is in one-to-one
correspondence with the unrooted labeled trees on those \(n\) terminals.
The routine enumerates all \(n^{n-2}=2^{\Theta(n\log n)}\) codes, decodes each
tree, orients it at \(r\), remeasures it, and retains the nondominated pairs.
This is exactly the terminal-only feasible domain; trees with added Steiner
vertices are outside both the enumeration and our problem definition.
For example, on terminals \(\{0,1,2,3\}\),
\((1,1)\leftrightarrow\{(1,0),(1,2),(1,3)\}\), the star centered at terminal
\(1\).  The growth rate explains the small cutoff.

The second exact mechanism is a polynomial dynamic program with quadratic
memory.  Fix the rooted MST \(T_M\) and a radius cap \(D\).
Cut MST edges into connected components; retain intracomponent edges, keep the
root component centered at \(r\), and attach every other component \(C\) by
one edge \((r,c)\) to a chosen terminal \(c\in C\).  Feasibility requires
\[
  d(r,c)+\max_{v\in C}d_{T_M}(c,v)\leq D.                       \tag{8}
\]
If \(F[u,c]\) is the minimum net change inside the rooted subtree \(T_u\)
while \(u\) remains in an open component centered at \(c\), and
\[
  G[v]=\min_{z\in T_v}\{d(r,z)+F[v,z]\},
\]
then a child \(v\) containing \(c\) must be kept.  Every other child
contributes
\[
  \min\{F[v,c],\,G[v]-d(u,v)\}.                                \tag{9}
\]
The \(O(n^2)\) states and transitions enumerate exactly the feasible
fixed-MST centered family.  Reconstruction gives the minimum-length member
under cap \(D\), whose radius is rechecked.  This is an exact construction
within a restricted family, not a stronger universal approximation theorem.

The bounded search operators have concrete topological meanings:

\begin{description}
  \item[Reparent.] Cut the parent edge of a rooted subtree and attach its root
        to a vertex outside that subtree.
  \item[Component exchange.] Delete one tree edge and reconnect the two
        components through any endpoint pair, allowing the detached component
        to be reoriented.
  \item[Cap repair.] Under a fixed requirement \(R\leq D\), accept only a
        modification that lowers \(L\).
  \item[Depth compression.] Using the old root-distance labels, choose shorter
        valid parent edges while checking that neither \(L\) nor \(R\)
        increases.
  \item[Perturb and descend.] Make a deterministic, coordinate-seeded
        reparenting perturbation, then resume greedy improvement.
\end{description}

A greedy move is accepted only when reconstruction improves its specified
exact score.  Total-detour exchange uses
\(L+\sum_v d_T(r,v)\), not the maximum radius \(R\), as an auxiliary score;
the initial perturbation may deliberately worsen its seed.  Every resulting
tree is nevertheless remeasured and Pareto-filtered in exact \((L,R)\).
The most expensive complete component pass is cubic and is confined to the
small intensive tier.

\subsection{One certified mode, three empirical modes}

Every mode constructs and exactly remeasures the balanced height partition
\(A_{\directradius}\).  The certified selector reserves that tree or an exact
Pareto dominator.  An empirical selector may trade it away when a small
portfolio cap makes frontier coverage more valuable.  Consequently, only
\textsc{certified} carries the universal guarantee in (7); the other
three modes continue from the same construction pool but make empirical
runtime--quality claims.

\begin{description}
  \item[\textsc{fast}] uses expanded deterministic anchors, centered
  candidates and bounded repair, while keeping the search shallow.
  \item[\textsc{balanced}] spends a larger fixed budget on the primary search
  branch for small instances and reverts to the scalable policy above that
  tier.
  \item[\textsc{quality}] adds the full intensive branch, a separately
  budgeted companion branch on selected moderate-size instances, richer
  cap schedules, and exact labeled-tree enumeration at the small cutoff.
\end{description}

On selected moderate-size instances, the companion branch reruns the same
improvement machinery with centered constructions disabled.  This prevents a
strong centered seed from consuming the finite seed and beam budget before a
structurally different topology is even explored.
Finally, the exact frontier is ordered by \((L,R,\mathrm{parent})\).  If more
than \(k\) trees remain, deterministic selection preserves the two endpoints,
and admits marked lower-mode witnesses while capacity remains, then fills the
largest exact area gaps.  Only \textsc{CertifiedSelect} reserves the
factor-two witness unconditionally.  The result is a portfolio whose modes
differ in work, not merely in labels.

%% file: experiments.tex
\section{Experimental Evaluation}
\label{sec:experiments}

The experiments ask three different questions: whether the practical
portfolio improves recent methods, how much quality is lost when only the
certified construction family is used, and whether the four modes produce a
stable runtime--quality ordering.  These questions require different
statistics.  A good tree near the diagonal need not cover an entire
cost--radius frontier.

\subsection{Data, baselines, and metrics}

The frozen comparison contains 28 development instances with \(8\) to \(32\)
terminals, including synthetic, grid, near-line, and extracted industrial
nets.  Its published-method reference is the nondominated union of classical
constructions, the author PD-II implementation swept over 19 parameter
values, and terminal-only trees exported from the 2023 MSPD/MSS
implementation over its prescribed \(4\times4\) settings.  The internal
predecessor developed in our approximation investigation is excluded from
this reference and is compared separately in Appendix~\ref{app:experiments}.
Because these 28 instances were visible during development, we use them for
regression and direct baseline comparison, not as evidence of out-of-sample
generalization.

A separate 50-instance set, not used to configure the final mode ordering,
contains uniform, clustered, grid, comb, radial, and near-line geometries
through \(n=96\).  Scaling uses 30 instances at
\(n\in\{64,128,256,512,1024,2048\}\), with three retained runs per
mode and instance.  Four instances through \(n=8\) are small enough for exact
labeled-tree enumeration via Pr\"ufer codes.

Every output is independently reconstructed from its parent array and
remeasured with integer arithmetic.  For portfolios \(P\) and \(Q\), write
\[
  P\succeq Q
  \quad\Longleftrightarrow\quad
  \forall q\in Q\;\exists p\in P:
  L(p)\leq L(q)\ \land\ R(p)\leq R(q).
\]
We report \emph{dominance} if \(P\succeq Q\) but \(Q\nsucceq P\), a
\emph{tie} if both hold, \emph{mixed} if neither holds, and a \emph{loss} if
only \(Q\succeq P\).  Hypervolume uses the same recorded reference corner and
the same \(M,\directradius\) normalization for both portfolios on each
instance.

To measure only the best balanced representative, define the empirical
portfolio common factor
\begin{equation}
  c_{\rm port}(I,P)=\min_{T\in P}
  \max\left\{\frac{L(T)}{M_I},
             \frac{R(T)}{\Delta_I}\right\}.                     \tag{10}
  \label{eq:empirical-factor}
\end{equation}
Unlike hypervolume, \(c_{\rm port}\) deliberately ignores the breadth of a
portfolio.

\begin{figure*}[t]
  \centering
  \begin{minipage}[c]{0.47\textwidth}
    \centering
    \begin{tikzpicture}
      \begin{axis}[
          width=\linewidth,height=5.15cm,
          xbar stacked,
          bar width=8pt,
          xmin=0,xmax=28,
          xtick={0,7,14,21,28},
          xlabel={instances},
          symbolic y coords={quality,balanced,fast,certified},
          ytick=data,
          yticklabels={quality,balanced,fast,certified},
          tick label style={font=\footnotesize},
          label style={font=\small},
          legend style={
            at={(0.5,-0.29)},anchor=north,legend columns=4,
            draw=none,font=\footnotesize,column sep=3pt
          },
          axis line style={black!45},
          xmajorgrids,
          grid style={black!8}
        ]
        \addplot[fill=FDdark!76,draw=FDdark!88!black]
          coordinates {(23,quality) (23,balanced) (12,fast) (0,certified)};
        \addplot[fill=black!17,draw=black!42]
          coordinates {(5,quality) (5,balanced) (6,fast) (1,certified)};
        \addplot[fill=black!43,draw=black!58]
          coordinates {(0,quality) (0,balanced) (9,fast) (1,certified)};
        \addplot[fill=FDred!70,draw=FDred!88!black]
          coordinates {(0,quality) (0,balanced) (1,fast) (26,certified)};
        \legend{dominance,tie,mixed,loss}
      \end{axis}
    \end{tikzpicture}
  \end{minipage}\hfill
  \begin{minipage}[c]{0.47\textwidth}
    \centering
    \begin{tikzpicture}
      \begin{axis}[
          width=\linewidth,height=5.15cm,
          xmin=0.98,xmax=1.56,ymin=0.98,ymax=1.56,
          xlabel={\(c_{\rm port}(I,\mathcal P_{\rm cert})\)},
          ylabel={\(c_{\rm port}(I,\mathcal P_{\rm ref})\)},
          tick label style={font=\footnotesize},
          label style={font=\small},
          grid=both,
          grid style={black!7},
          axis line style={black!45}
        ]
        \addplot[black!40,dashed,domain=0.98:1.56] {x};
        \addplot[only marks,mark=*,mark size=1.75pt,
          mark options={fill=FDdark!60,draw=FDdark!90!black}]
          coordinates {
            (1.260000,1.043956) (1.310638,1.076923)
            (1.006716,1.006716) (1.264129,1.149956)
            (1.310986,1.026114) (1.158148,1.025426)
            (1.270175,1.183178) (1.341369,1.053264)
            (1.431338,1.095951) (1.396495,1.061336)
            (1.301782,1.037347) (1.091743,1.018349)
            (1.335148,1.157764) (1.478173,1.123492)
            (1.320775,1.120666) (1.240458,1.140983)
            (1.111894,1.047493) (1.528837,1.086608)
            (1.047001,1.019376) (1.056304,1.025704)
            (1.014953,1.012461) (1.037486,1.018454)
            (1.056772,1.023520) (1.109697,1.057576)
            (1.000000,1.000000) (1.237981,1.000000)
            (1.077419,1.074074) (1.083499,1.071571)
          };
        \draw[densely dotted,FDdark!80]
          (axis cs:1.209997,0.98)--(axis cs:1.209997,1.56);
        \draw[densely dotted,black!55]
          (axis cs:0.98,1.062795)--(axis cs:1.56,1.062795);
        \node[font=\footnotesize,anchor=south west,text=FDdark!85!black]
          at (axis cs:1.215,1.50) {certified mean \(1.21\)};
        \node[font=\footnotesize,anchor=south west,text=black!60]
          at (axis cs:1.285,1.067) {reference mean \(1.06\)};
        \node[font=\footnotesize,anchor=south east,text=black!50]
          at (axis cs:1.54,1.52) {equal factor};
      \end{axis}
    \end{tikzpicture}
  \end{minipage}
  \caption{Two views of the 28-instance comparison with the published-method
  union.  The stronger practical modes cover the full union, whereas the
  theorem-first portfolio does not (left).  Nevertheless, its best balanced
  tree lies well inside the factor-two guarantee; each point at right compares
  its empirical common factor with that of the denser published portfolio.
  The panels measure frontier breadth and one-point balance, respectively.}
  \Description{Left: stacked horizontal bars give dominance, tie, mixed, and
  loss counts for four modes against the published-method union.  Quality and
  balanced have twenty-three dominances and five ties; certified has
  twenty-six losses, one mixed case, and one tie. Right: twenty-eight scatter
  points compare the best common factor of the certified portfolio with the
  reference portfolio.}
  \label{fig:quality-certificate}
\end{figure*}

\subsection{Comparison with the frozen union}

Figure~\ref{fig:quality-certificate}a separates the theorem-first mode from
the empirical portfolio modes.  Balanced and quality never lose to the
published-method reference: each strictly covers it on 23 instances and ties
on the remaining five.  Their sums of normalized per-instance hypervolume
differences are \(0.0863\) and \(0.0863\), respectively.  The small difference is
consistent with their intended roles: quality spends additional work to
recover a few frontier points, rather than changing the objective.  Fast is
also positive in aggregate hypervolume (\(0.0654\)), with 12 dominances, six
ties, nine mixed cases, and one loss.

The certified portfolio is covered by the published reference on 26
instances, is mixed on one, and ties on one; its aggregate hypervolume
difference is \(-0.7478\).
The approximation theorem is therefore not a frontier-dominance result.
In direct comparisons, balanced and quality each dominate PD-II on 25
instances and tie it on three; each dominates the terminal MSPD/MSS portfolio
on 23 and ties it on five.  The union is the stricter comparison because a
candidate must cover the complementary points supplied by all published
methods at once.

\subsection{What does the certificate cost?}
\label{sec:certified-quality}

The right panel of Figure~\ref{fig:quality-certificate} uses
Eq.~\eqref{eq:empirical-factor}.  On the 28 development instances, the
certified portfolio has mean \(c_{\rm port}=1.21\), median \(1.24\), and range
\(1.00\)--\(1.53\).  Its nondominated output contains one to six trees, with
median three.  The much denser reference union has mean \(1.06\) and maximum
\(1.18\).

The apparent tension with the left panel is informative.  A five-threshold
height sweep often finds a respectable point near the diagonal, far inside
its universal ceiling of two.  It does not, however, populate the shoulders
between the MST and shortest-path endpoints.  Empirical search improves the
central compromise and, more decisively, fills those missing regions.  The
theorem construction is thus a practical core, but not a substitute for the
frontier engine.

The same distinction remains on held-out data.  There the certified
portfolio has mean \(c_{\rm port}=1.25\), median \(1.22\), and maximum
\(1.97\).
The factor-two witness is independently reconstructed on all 78 development
and held-out instances: 66 portfolios contain the original balanced
height-partition tree and 12 contain an exactly remeasured Pareto dominator.

\subsection{Mode ordering and scaling}

On the 50 held-out instances, balanced strictly covers fast on 21 and ties it
on 29; quality strictly covers balanced on nine and ties it on 41.  Fast
strictly covers the certified portfolio on 44, ties it on three, and is
incomparable on three; it never loses.  Additional empirical budget therefore
gives a monotone portfolio hierarchy, while \textsc{certified} remains a
separate theorem-first endpoint.

\begin{figure}[t]
  \centering
  \begin{tikzpicture}
    \begin{axis}[
        width=\columnwidth,height=5.2cm,
        xmode=log,log basis x=2,
        ymode=log,log basis y=10,
        xmin=60,xmax=2200,ymin=5,ymax=400,
        xtick={64,128,256,512,1024,2048},
        xticklabels={64,128,256,512,1024,2048},
        xlabel={terminals \(n\)},
        ylabel={median time (ms)},
        tick label style={font=\footnotesize},
        label style={font=\small},
        grid=both,
        grid style={black!8},
        legend style={draw=none,font=\footnotesize,at={(0.02,0.98)},
                      anchor=north west}
      ]
      \addplot[FDdark,line width=1.25pt,mark=*]
        coordinates {(64,6.137) (128,6.273) (256,6.135)
                     (512,9.350) (1024,13.275) (2048,27.141)};
      \addplot[black!75,line width=1.05pt,mark=square*]
        coordinates {(64,8.672) (128,16.983) (256,15.312)
                     (512,18.096) (1024,42.204) (2048,140.822)};
      \addplot[black!45,line width=1.05pt,dashed,mark=triangle*]
        coordinates {(64,8.714) (128,16.900) (256,15.560)
                     (512,18.548) (1024,41.941) (2048,142.153)};
      \addplot[FDred!85!black,line width=1.1pt,dashdotted,mark=diamond*]
        coordinates {(64,9.807) (128,19.915) (256,26.237)
                     (512,22.660) (1024,54.241) (2048,182.072)};
      \legend{certified,fast,balanced,quality}
    \end{axis}
  \end{tikzpicture}
  \caption{Single-thread end-to-end median time on the scaling set.  Each
  point aggregates three runs for each of five geometries; process startup
  and serialization are included.}
  \Description{A log-log line plot of median runtime from 64 to 2048
  terminals. Certified is lowest throughout and reaches 27 milliseconds
  at 2048; the three empirical modes reach about 141 to 182 milliseconds.}
  \label{fig:scaling}
\end{figure}

All 360 scaling invocations completed.  At \(n=2048\):
\begin{center}
  \small
  \begin{tabular}{lrrr}
    \toprule
    mode & median ms & max ms & peak MiB\\
    \midrule
    certified & 27.1 & 32.5 & 2.48\\
    fast      & 140.8 & 182.8 & 3.66\\
    balanced  & 142.2 & 220.4 & 3.75\\
    quality   & 182.1 & 302.3 & 4.05\\
    \bottomrule
  \end{tabular}
\end{center}
The source is built with portable \texttt{-O3 -DNDEBUG} settings and uses one
thread; no architecture-specific compiler flag is enabled.

On the held-out set, the certified mode takes a median \(6.9\) ms.  Fast,
balanced, and quality take \(13.0\), \(62.2\), and \(63.3\) ms.  These
are absolute candidate times.  The published artifacts for the recent
baselines do not contain
same-machine timings under the common terminal-only objective, so we do not
form a runtime ratio from their paper-reported numbers.

\subsection{Exactness and component evidence}

Quality reproduces the complete exact frontier on all four small instances
for which every labeled tree was enumerated.  The release also passes 26 unit
and property tests, independent certificate reconstruction, and exact
remeasurement of every reported tree.

\begin{table}[t]
  \caption{Compact quality-mode ablation on the 28-instance frozen union.}
  \label{tab:ablation}
  \centering
  \small
  \begin{tabular}{lrrrrr}
    \toprule
    Variant & Dom. & Tie & Mix. & Loss & \(\sum\Delta\mathrm{HV}\)\\
    \midrule
    construction only     & 1  & 9 & 8 & 10 & \(-0.0589\)\\
    scalable tier only   & 12 & 7 & 9 & 0  & \(+0.0665\)\\
    no deep search        & 20 & 6 & 2 & 0  & \(+0.0840\)\\
    no component exchange & 23 & 5 & 0 & 0  & \(+0.0857\)\\
    reduced starts        & 23 & 5 & 0 & 0  & \(+0.0862\)\\
    \hprcrst\ quality     & 23 & 5 & 0 & 0  & \(\mathbf{+0.0863}\)\\
    \bottomrule
  \end{tabular}
\end{table}

Table~\ref{tab:ablation} confirms that constructions alone do not explain the
full result: they fail to cover the published union on 18 instances.  The
scalable-only policy makes the largest single jump, and deep search removes the
remaining losses.  Component exchange and additional starts then recover
smaller complementary regions.  The ablations are not additive, because
constructions are also search seeds.  We return to that interaction in
Section~\ref{sec:discussion}.

%% file: discussion.tex
\section{Discussion}
\label{sec:discussion}

\subsection{What is settled, and what is not}

The reduction settles the original terminal-only Manhattan decision problem only
at the level of ordinary NP-completeness.  Because its coordinates scale with
the numerical \textsc{Partition} instance, it leaves strong hardness and
pseudopolynomial exact algorithms open.  The approximation theorem is
similarly specific.  It gives one tree with
\(\boldsymbol{\rho}\preceq(2,2)\) in every metric instance, and the
arbitrary-metric construction shows that this line of reasoning cannot force
a smaller universal common factor for one tree.  It does not rule out a
below-two theorem using planar \(L_1\) structure, nor an approximate-Pareto
guarantee in which different portfolio members cover different comparison
trees.

These boundaries matter because exact optimization, geometric approximation,
and portfolio design are different questions.  Earlier routing work often met
all three through the same construction; the present results separate them.

\subsection{Why two exact routes converged}

The two polynomial-time investigations received sharply different
information.  One saw only the problem, benchmark, and evaluator; the other
saw the accumulated references and base code.  Nevertheless, both reached a
parent-arc master with exact rational arithmetic, lazy rooted-connectivity
cuts, and path or potential cuts for radius violations.

The convergence was structural rather than social.  Once a terminal tree is
rooted, its combinatorial content is exactly one parent choice for each
nonroot terminal, so total length is linear in the selected parent arcs.
Radius has an equally canonical local form: a selected arc adds a
nonnegative detour increment to the child's root-path potential.  The two
remaining global failures are root disconnection and excessive detour on a
parent chain.  The first exposes a minimum-cut certificate; the second
exposes a forbidden path, projected potential, or flow inequality.  The
shared architecture is therefore the decomposition induced by the problem.

\begin{example}[Independent witnesses to the same obstruction]
\label{ex:independent-scalarization}
The blind condition found a five-terminal frontier containing the neighboring
points
\[
  (L,R)\in\{(27,12),\ \boxed{(26,13)},\ (22,15)\}.
\]
For the boxed point to minimize \(L+\lambda R\), comparison with the other two
points requires both \(\lambda\leq1\) and \(\lambda\geq2\).
Independently, the full-context condition found
\[
  (L,R)\in\{(51,20),\ \boxed{(42,23)},\ (34,24)\},
\]
whose boxed point requires both \(\lambda\leq3\) and \(\lambda\geq8\).
The instances and incompatible intervals are different.  What recurs is the
same geometric obstruction: a rectangularly feasible Pareto point can lie
above the lower convex envelope visible to every weighted sum.
\end{example}

The common stopping point has the same explanation.  Rooted-subset,
overlong-path, conflict, and tightened potential inequalities all left small
fractional gaps.  General integer cuts can force finite exact progress, but
polynomial separation in one round does not bound cutting-plane rank, the
total number of pivots, or the bit width accumulated across rounds.  The two
conditions did not independently prove polynomiality; they independently
located the same global integrality defect.

Their negative results were still algorithmically selective.  Scalarization
survived because it cheaply locates supported frontier trees, although
unsupported points rule it out as a complete decision method.  Local
exchange survived because each accepted move improves a checked incumbent,
although explicit dominated local optima rule it out as a proof of
completeness.  The cutting-plane engines themselves were not imported into
\hprcrst.  They supplied formulations, audit targets, and warnings about
uncontrolled exact-search tails.

\subsection{Why the counterexample changed the accounting unit}

The approximation investigation first sought an individual-terminal
\((2,2)\) invariant for threshold and LAST-like traversals.  It survived more
than half a million exhaustive, random, and structured tests before failing
on a 58-terminal subdivided broom.  Subdivision preserves the long route but
splits its removable weight among many small terminal edges.  A root shortcut
then replaces only one small edge, so charges that looked local on coarse
instances accumulate without a global budget.

The height partition changes the unit of accounting.  A shortcut is charged
to the entire residual component that caused the height threshold to fail;
different shortcuts receive disjoint residuals.  The proof controls the
maximum global radius actually used by the objective, rather than promising
the stronger per-terminal stretch statement that the broom falsifies.  The
counterexample in Example~\ref{ex:broom-counterexample} did not merely change
a constant.  It identified the wrong object being amortized.

\subsection{A good central tree is not a good frontier}

The certified experiment gives a more nuanced answer than either ``the theorem
won'' or ``the theorem was only decorative.''  On the 28 development instances,
the small certified portfolio has
\[
  \operatorname{mean}c_{\rm port}=1.21,\qquad
  \max c_{\rm port}=1.53,
\]
where \(c_{\rm port}\) is the best common factor in the portfolio.  On the 50
held-out instances the corresponding values are \(1.25\) and \(1.97\).
Thus the observed balanced point is usually well inside the worst-case
ceiling of two.

The published-method reference nevertheless covers the certified portfolio on
26 of 28 instances; one case is mixed and one is a tie.  This is not a
contradiction.  The common factor asks for one point near the diagonal;
hypervolume and portfolio coverage reward a sequence of useful points from
the MST end to the shortest-path end.  A rooted MST plus a handful of height
thresholds is naturally good at the first task.  Each threshold resets
exactly when accumulated residual height crosses \(H\), so the sweep samples
a one-dimensional structural envelope already aligned with length and
radius.  The same one-dimensional family has too few topological degrees of
freedom to fill both shoulders of an empirical frontier.

\begin{example}[The same MST length, a different rooted geometry]
\label{ex:grid-mst-tie}
On \texttt{grid\_5x5}, \(M=416\) and \(\directradius=72\).  The certified
fixed-MST sweep returns
\[
  (416,140),\qquad (515,89),\qquad (763,72),
\]
whose best common factor is \(1.24\).  The published reference and all three
empirical modes instead contain \((416,72)=(M,\directradius)\), which dominates
the entire certified portfolio.  Several MSTs have length \(416\), but their
rooted radii differ.  Cutting and reattaching one canonical MST explores
thresholds on that topology; it does not explore alternative MST tie
structures.  Here a topology change, not a finer height sweep, closes the
frontier in this extreme tie case.
\end{example}

This also clarifies what search contributes.  It does not rescue a collapsed
certificate: the certified balanced point is already useful and inexpensive.
But it is more than a cosmetic refinement.  Constructions alone fail to cover
the published reference on 18 of 28 instances, whereas the final quality mode
has no loss or mixed case and 23 strict dominances.  Search improves the
central compromise and, more decisively, articulates the rest of the selectable
tradeoff curve.  The height construction is therefore the algorithmic spine
inherited by the final solver.  The certified mode reserves its witness; the
empirical modes begin from the same construction pool and add search, but may
trade that witness away during portfolio truncation.

The certificate is also inexpensive inside the released solver.  Once the
dense MST is available, the certified mode performs only five linear postorder
partitions and exact checks.  At \(n=2048\), its median end-to-end time is
\(27\) ms, compared with \(141\)--\(182\) ms for the three empirical modes.
This is an internal mode comparison, not a cross-paper runtime claim; the
published artifacts do not provide timings under the same objective and
measurement boundary.

\subsection{Why the final solver kept some ideas and not others}

The final condition could inspect every earlier result, but full context did
not produce a union of all earlier algorithms.  A component survived when it
could be assigned a bounded and verifiable role:

\begin{itemize}
  \item the hardness reduction fixed the claim boundary but supplied no
        runtime mechanism;
  \item height partitioning was cheap enough to become a separately checked
        certified mode;
  \item star, MST, PD, BRBC, and KRY/LAST constructions became endpoints and
        diverse search seeds, as well as regression witnesses;
  \item exactness was retained only where its work was explicitly bounded:
        complete labeled-tree enumeration for \(n\leq8\) and the fixed-MST
        centered dynamic program through its deployment cutoff;
  \item component exchange, reparenting, cap repair, and compression were
        retained only after empirically checked improvements;
  \item the earlier rational cutting-plane engines were not ported, because
        their unresolved tails conflicted with the production role.
\end{itemize}

The most important transfer from the exact investigations was therefore
negative knowledge: which scalarizations miss feasible rectangles, which
local minima support no completeness claim, and where rational exact search
loses a polynomial bound.  Broader information narrowed claims and deployment
domains at least as much as it supplied constructions.

\begin{example}[Search changed the trees, not only their number]
\label{ex:uniform-search}
On \texttt{uniform\_n30\_s2}, construction alone produces ten nondominated
trees; quality mode produces 33.  The difference is not merely denser
sampling of one curve.  At radius \(1967\), checked search reduces length
from \(5358\) to \(5179\); at radius \(1351\), it reduces length from \(6370\)
to \(5510\).  Both portfolios begin at the same \((5115,3425)\) MST endpoint.
The construction pool therefore supplies a reachable region and useful
anchors, while exchange search changes the topologies available inside that
region.
\end{example}

\subsection{Why the components form an ecology}

The ablations are non-additive because candidate families participate in a
search process, not merely in a final union.  With a finite seed and
deep-search budget, removing one family also changes the neighborhoods later
operators are allowed to explore.

\begin{example}[A strong family can hide complementary trees]
\label{ex:centered-companion}
For compactness, abbreviate the three instances by
\[
  \begin{aligned}
    U_0&:=\texttt{uniform\_n30\_s0},\\
    U_2&:=\texttt{uniform\_n30\_s2},\\
    C_1&:=\texttt{cluster2\_n30\_s1}.
  \end{aligned}
\]
After exact remeasurement, the companion search without centered candidates
retained eight points absent from the primary branch:
\[
  \begin{aligned}
    U_0:\;&(4330,1880),(4412,1757),(4453,1731),\\
    U_2:\;&(5127,3329),\\
    C_1:\;&(2579,1676),(2582,1674),\\[-1mm]
          &(2594,1666),(2595,1665).
  \end{aligned}
\]
Omitting both centered and critical-height constructions still lost one prior
witness.  The final policy therefore gives the two searches separate
budgets and merges them only after both have produced checked frontiers.
Mixing all seeds before search would recreate the crowding that exposed these
eight points.
\end{example}

For the same reason, provenance tags are terminal labels rather than causal
weights.  Of 263 public-set output trees, 203 are tagged as exchange search,
whereas only 17 retain a PD, LAST, or BRBC tag and seven retain a structural
construction tag.  An anchor may disappear after several exchanges while
remaining necessary to reach the final topology.  Conversely, deleting a
frequently returned family can free enough budget to improve another branch.
The pipeline is better understood through interactions among seeds,
operators, and finite budgets than through a ranking of standalone
heuristics.

\subsection{The research object was the transition}

The most revealing observations in this study are not well described by a
division between ``human'' and ``machine'' work.  The blind and full-context
conditions converged because the formulation forced them to; the
approximation condition advanced when a long-lived conjecture was falsified;
and the final solver used earlier failures more directly than it used earlier
code.  In each case, progress occurred at a transition between inspectable
objects:
\[
  \begin{aligned}
  \text{conjecture}
  &\longrightarrow \text{inequality}
   \longrightarrow \text{counterexample}\\[-1mm]
  &\longrightarrow \text{revised theorem}
   \longrightarrow \text{checked construction}.
  \end{aligned}
\]
That chain is more informative than the number of generated proposals.

It also suggests a concrete research regime for older algorithmic problems.
The favorable cases are not simply those with a short statement.  They have
a white-box objective, inexpensive exact falsification on small instances,
several plausible formulations, limited hidden experimental infrastructure,
and a literature precise enough to define what remains open.  Information
boundaries can then be varied as an experimental condition: a blind attempt
tests rediscovery and structural convergence; a full-context attempt tests
whether inherited knowledge actually changes the bottleneck; an adversarial
condition tests the claims produced by both.

Agents for formal mathematics, evaluator-guided program discovery, and
self-evolving EDA systems emphasize different parts of this regime
~\cite{TaoEtAl2026SolversToResearch,RomeraParedesEtAl2024FunSearch,
DeepMind2025AlphaEvolve,YuRen2026SelfEvolvedABC}.  Prim--Dijkstra required all
three forms of evidence: a proof that could survive reconstruction, a
counterexample that could change the proof, and an implementation whose
frontier could survive a frozen evaluator.  The public artifact exposes a
reproducible subset of those objects, with the boundary stated in
Section~\ref{sec:process}; it is not offered as a complete hidden trace.

The older literature was usable because its definitions, constructions, and
open question could be reconstructed without changing the model.  In this
study that precision let old claims become evaluator cases, proof obligations,
and counterexample searches.  It is a practical precondition for the research
regime above: without stable definitions and checkable objects, additional
agent effort would only multiply incompatible interpretations.

%% file: conclusion.tex
\section{Conclusion}
\label{sec:conclusion}

This revisit settles the terminal-only Manhattan decision problem, gives its
global cost--radius tradeoff a direct \((2,2)\) construction, and carries that
construction into \hprcrst.  The process record adds three concrete
observations: two independently constrained exact attempts reached the same
integrality barrier; one 58-terminal counterexample redirected the
approximation proof; and the final solver retained the theorem-bearing
construction without importing unresolved exact-search tails.  Together they
form an audited case in which language-model agents helped move an older
algorithmic question from reconstruction to proof and implementation, while
proofs, counterexamples, and a frozen evaluator remained the standards of
evidence.

%% file: appendix_npc.tex
\section{Complete Weak NP-Completeness Proof}
\label{app:npc}

This appendix supplies the details suppressed in
Theorem~\ref{thm:npc}.  We reduce from binary \textsc{Partition}.  Its input is
a list of positive integers \(a_1,\ldots,a_n\); the question is whether some
subset sums to \(S/2\), where \(S=\sum_i a_i\).  This problem is NP-complete
under polynomial-time many-one reductions~\cite{GareyJohnson1979}.

\subsection{Membership in NP}

A certificate lists \(n-1\) unordered pairs of terminal indices.  A verifier
rejects invalid indices, loops, and any edge whose endpoints are already
joined in a disjoint-set structure; after the last edge it checks that one
component remains.  It then computes every edge length
\(\lvert x_i-x_j\rvert+\lvert y_i-y_j\rvert\), sums them, and performs one
rooted traversal to compute all root distances.

If \(N\) is the complete binary input length, a coordinate difference, an
edge, and a sum of at most \(n-1\) edges have \(O(N+\log n)=O(N)\) bits.
There are polynomially many exact additions, comparisons, and graph
operations.  The empty certificate for \(n=1\) gives \(L=R=0\).
Consequently the decision problem belongs to NP.

\subsection{The reduction}

Given \(a_1,\ldots,a_n\), put
\[
  S=\sum_i a_i,\qquad h=10S,\qquad t=4S,\qquad X=2hn+t.
\]
Create
\[
  s_i=(2hi,0)\quad(0\leq i\leq n),\qquad
  u_i=(2hi-3a_i,a_i)\quad(1\leq i\leq n),
\]
and \(z=(X,0)\), with root \(s_0\).  The budgets are
\[
  B=X+3S,\qquad D=X+S.                              \tag{11}
\]
Since
\[
  x(u_i)-x(s_{i-1})=2h-3a_i\geq17S>0,\qquad
  x(s_i)-x(u_i)=3a_i>0,
\]
the strict order is
\[
  s_0,u_1,s_1,u_2,s_2,\ldots,u_n,s_n,z.             \tag{12}
\]
All terminals are distinct.  Every coordinate and budget is \(O(nS)\), so
each has \(O(\log n+\log S)\) bits and the mapping has polynomial encoding
length.

\begin{figure*}[t]
  \centering
  \begin{tikzpicture}[
      x=0.84cm,y=0.72cm,
      term/.style={circle,draw=black!65,fill=white,minimum size=4.5mm,
                   inner sep=0pt,font=\footnotesize},
      root/.style={rectangle,draw=black,fill=black,text=white,
                   minimum size=4.5mm,inner sep=0pt,font=\footnotesize},
      axis/.style={line width=.9pt,draw=black!65},
      selected/.style={line width=1.4pt,draw=FDdark},
      unselected/.style={line width=1.4pt,draw=FDred!82!black},
      guide/.style={densely dotted,draw=black!30},
      lab/.style={font=\footnotesize,fill=white,inner sep=1pt}
    ]
    \node[root] (sm) at (0,0) {\(s_{i-1}\)};
    \node[term] (u) at (4.0,1.55) {\(u_i\)};
    \node[term] (s) at (5.5,0) {\(s_i\)};
    \draw[guide] (u)--(4.0,0);
    \draw[axis] (sm)--node[lab,below] {\(2h\)} (s);
    \draw[selected] (sm)--node[lab,sloped,above,text=FDdark!82!black]
      {\(2h-2a_i\)} (u);
    \draw[selected] (u)--node[lab,sloped,above,text=FDdark!82!black]
      {\(4a_i\)} (s);
    \draw[decorate,decoration={brace,amplitude=3pt,mirror},black!45]
      (4.0,-0.45)--node[below=4pt,font=\footnotesize] {\(3a_i\)}
      (5.5,-0.45);
    \draw[decorate,decoration={brace,amplitude=3pt,mirror},black!45]
      (6.0,0)--node[right=4pt,font=\footnotesize] {\(a_i\)}
      (6.0,1.55);
    \node[font=\small\bfseries] at (2.75,2.35)
      {selected block: \(u_is_i\) lies on the root path};

    \begin{scope}[shift={(10.2,0)}]
      \node[root] (sm2) at (0,0) {\(s_{i-1}\)};
      \node[term] (u2) at (4.0,1.55) {\(u_i\)};
      \node[term] (s2) at (5.5,0) {\(s_i\)};
      \draw[guide] (u2)--(4.0,0);
      \draw[unselected] (sm2)--node[lab,below,text=FDred!88!black]
        {\(2h\)} (s2);
      \draw[unselected] (u2)--node[lab,sloped,above,text=FDred!88!black]
        {\(4a_i\)} (s2);
      \node[font=\small\bfseries] at (2.75,2.35)
        {unselected block: \(u_i\) is a leaf};
    \end{scope}
  \end{tikzpicture}
  \caption{The two canonical block choices in the forward reduction.  Blue
  and muted-red paths are combinatorial terminal edges weighted by Manhattan
  distance; the dotted projection is only a coordinate guide.  The reverse
  proof does not assume either drawing and starts from an arbitrary tree in
  the complete terminal metric.}
  \Description{Two schematic block gadgets. In a selected block the path
  enters the upper terminal and then the axis terminal. In an unselected
  block the path goes directly between axis terminals and the upper terminal
  is a leaf.}
  \label{fig:npc-block}
\end{figure*}

\subsection{Forward implication}

Suppose \(I\subseteq[n]\) has \(A=\sum_{i\in I}a_i=S/2\).  Include
\(u_is_i\) for every \(i\); include \(s_{i-1}u_i\) if \(i\in I\) and
\(s_{i-1}s_i\) otherwise; finally include \(s_nz\).  The construction joins
one new block at a time and is a tree.  Its relevant lengths are
\[
\begin{aligned}
  d(s_{i-1},u_i)&=2h-2a_i,&
  d(s_{i-1},s_i)&=2h,\\
  d(u_i,s_i)&=4a_i,&
  d(s_n,z)&=t.
\end{aligned}
\]
A selected block contributes \(2h+2a_i\) to \(L\), and an unselected block
contributes \(2h+4a_i\).  Hence
\[
  L=X+2A+4(S-A)=X+3S=B.                            \tag{13}
\]
The root--\(z\) path uses \(u_is_i\) exactly for \(i\in I\), and therefore
\[
  d_T(s_0,z)=X+2A=X+S=D.                           \tag{14}
\]
Every vertex on that path is within \(D\).  If \(i\notin I\), the only
off-path vertex in the block is \(u_i\).  With
\(A_i=\sum_{j\in I,j\leq i}a_j\),
\[
  d_T(s_0,u_i)=2hi+2A_i+4a_i
   \leq2hn+S+4S=2hn+t+S=D.                         \tag{15}
\]
Thus the produced tree is feasible.

\subsection{Reverse implication}

Let \(T\) be any feasible tree in the complete Manhattan graph, and let \(P\)
be its unique \(s_0\)-to-\(z\) path.  Give a tree edge multiplicity two if it
lies on \(P\) and one otherwise.  The weighted sum is
\[
  F=L(T)+d_T(s_0,z)\leq B+D=2X+4S.                 \tag{16}
\]

\begin{lemma}[Interblock cut forcing]
\label{lem:interblock}
Exactly one tree edge crosses every prefix cut between two consecutive blocks
in (12), and that edge lies on \(P\).
\end{lemma}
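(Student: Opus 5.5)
The plan is a cut-counting argument on horizontal lengths. Every Manhattan edge length splits into a horizontal part and a vertical part. I will write the horizontal part of an edge as the sum of the widths of the elementary gaps it spans in the strict order (12); these are the open intervals between consecutive \(x\)-coordinates \(s_0,u_1,s_1,\ldots,u_n,s_n,z\). For each such gap \(g\) of width \(w_g\), let \(c_g\) be the number of tree edges whose horizontal span contains \(g\), and let \(p_g\) be the number of edges of \(P\) with that property. Charging each tree edge once and each \(P\)-edge once more, as in (16), gives
\[
  F \;=\; \sum_g w_g\,(c_g+p_g) \;+\; V ,
\]
where \(V\) is the total vertical length with the same multiplicities. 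The gap widths telescope to \(\sum_g w_g=X\).

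The two basic lower bounds come from connectivity and parity. Since \(T\) spans terminals on both sides of every vertical line through a gap, \(c_g\ge1\). Since \(P\) starts at the leftmost terminal \(s_0\) and ends at the rightmost terminal \(z\), it crosses every such line an odd number of times, so \(p_g\) is odd and \(p_g\ge1\). Together these give \(F\ge 2X+V\), and the slack permitted by (16) is at most \(4S\). I will then record that \(V>0\): the terminal \(u_1\) has \(y=a_1>0\) and every other terminal except the \(u_i\) lies on the axis, so some tree edge has positive vertical length.

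The core step treats the interblock gaps. These are the gaps between \(s_{i-1}\) and \(u_i\), of width \(2h-3a_i\ge17S\), together with the final gap between \(s_n\) and \(z\), of width \(t=4S\). Suppose \(c_g\ge2\) for such a gap. There are two cases.
\begin{itemize}
  \item If some crossing tree edge is off \(P\), it contributes at least an extra \(w_g\) beyond the baseline \(2w_g\).
  \item If every crossing edge lies on \(P\), then \(p_g=c_g\ge2\). Parity forces \(p_g\ge3\), so the excess is at least \(4w_g\).
\end{itemize}
In either case the excess is at least \(w_g\). For a nonfinal interblock gap this exceeds \(4S\), contradicting (16). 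For the final gap it is at least \(4S\), and adding \(V>0\) again gives \(F>2X+4S\). Hence \(c_g=1\). Because \(p_g\ge1\) and the \(P\)-edges crossing \(g\) are among the tree edges crossing it, the unique crossing edge must lie on \(P\) (and \(p_g=1\)). This proves the lemma for every prefix cut between consecutive blocks.

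The main obstacle is bookkeeping rather than an idea. First, I must identify exactly which cuts the lemma calls interblock and confirm that the final gap is included. Second, the parity argument for \(P\) must be stated precisely: a path from the leftmost to the rightmost terminal crosses each vertical line an odd number of times. Third, the excess must be attributed so that the single-gap contradictions do not double-count against the other gaps' baseline \(w_g(c_g+p_g)\ge 2w_g\). I would handle the third point by writing \(F-2X-V=\sum_g w_g(c_g+p_g-2)\), in which every summand is nonnegative, and then isolating the offending gap.
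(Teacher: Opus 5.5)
Your proposal is correct and is essentially the paper's argument: write the horizontal part of \(F\) as \(\sum_g w_g(c_g+p_g)\) with baseline \(2X\), rule out a third crossing on any interblock gap (using \(17S>4S\) for the nonfinal gaps and the strictly positive vertical contribution for the final \(4S\) gap), and conclude \(c_g=p_g=1\), so the unique crossing edge lies on \(P\). The parity observation and your two-case split are harmless but unnecessary, since \(c_g\ge 2\) together with \(p_g\ge 1\) already gives \(c_g+p_g\ge 3\).
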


\begin{proof}
For an open vertical gap between consecutive terminals, let \(c_T\) and
\(c_P\) be the numbers of crossing tree and path edges.  Connectivity gives
\(c_T\geq1\), while the path from the leftmost to the rightmost terminal gives
\(c_P\geq1\).  An abstract complete-metric edge contributes its horizontal
length once to every gap between its endpoints, so the horizontal part of
\(F\) is exactly the sum of each gap width times \(c_T+c_P\).  The baseline
over all gaps is \(2X\).

The gap preceding block \(i\) has width at least \(17S>4S\).  A third
crossing would make the horizontal contribution exceed (16).  The final gap
has width \(4S\); a third crossing there would make its horizontal
contribution at least \(2X+4S\), while the vertical contribution is strictly
positive because some \(u_i\) has positive height.  Thus \(c_T+c_P=2\) on
every interblock cut, so \(c_T=c_P=1\).
\end{proof}

\begin{lemma}[Forced topology]
\label{lem:forced-topology}
The tree has exactly one edge between each pair of consecutive blocks, no
other interblock edge, and every internal edge \(u_is_i\).  All interblock
edges lie on \(P\).
\end{lemma}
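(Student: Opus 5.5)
The plan is to derive Lemma~\ref{lem:forced-topology} from Lemma~\ref{lem:interblock} by a nested-cut argument, and then to count edges. Group the terminals into blocks \(\{s_0\}\), \(\{u_1,s_1\}\), \ldots, \(\{u_n,s_n\}\), \(\{z\}\) according to the order (12). The interblock prefix cuts are nested: cut \(j\) separates the first \(j\) groups from the rest. Lemma~\ref{lem:interblock} says each such cut is crossed by exactly one tree edge, and that this edge lies on \(P\).

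Take an edge \(e\) joining groups \(i<k\). It crosses every cut \(j\) with \(i\le j<k\). Suppose \(k>i+1\), so that \(e\) skips at least one block. Group \(i+1\) is nonempty, and its terminals must connect to both sides within \(T\).

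The cuts \(i\) (just before group \(i+1\)) and \(i+1\) (just after it) are each crossed only by \(e\). So every tree edge incident to group \(i+1\) stays inside the group. Group \(i+1\) is then a union of components of \(T\) minus nothing, which disconnects the tree and gives a contradiction. Hence every interblock edge joins consecutive groups. Uniqueness per cut then gives exactly one edge between each consecutive pair and no other interblock edges, and each of these \(n+1\) edges lies on \(P\) by Lemma~\ref{lem:interblock}.

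Counting finishes the proof. There are \(2n+2\) terminals, so \(T\) has \(2n+1\) edges. Exactly \(n+1\) of them are interblock, so \(n\) edges lie inside groups. Only the \(n\) two-vertex blocks \(\{u_i,s_i\}\) can host an internal edge, and each hosts at most one, namely \(u_is_i\). All \(n\) must therefore be present. This is consistent with connectivity: each block is joined internally, and the blocks are chained by the bridges.

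The one delicate point is the wording of the cuts. Lemma~\ref{lem:interblock} addresses gaps \emph{between consecutive blocks}, that is, before each \(u_i\) and before \(z\). I must confirm that these are exactly the cuts used above, and that the gap between \(u_i\) and \(s_i\) (width \(3a_i\)) is never needed. The argument only invokes the \(n+1\) interblock cuts, so the intra-block gap may be crossed freely, for example by a bridge entering at \(s_i\) from the left or leaving from \(u_i\) to the right. This is why the lemma fixes which groups the bridges connect, but not which endpoint inside a block each bridge uses.
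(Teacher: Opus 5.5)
Your proof is correct and takes the same route as the paper: uniqueness of the crossing edge on each nested prefix cut (Lemma~\ref{lem:interblock}) rules out block-skipping edges, and counting the \(2n+1\) tree edges then forces every \(u_is_i\). The only difference is cosmetic. You exclude a skipping edge by showing the skipped block would have no tree edge leaving it, so \(T\) would be disconnected. The paper instead observes that deleting that edge cannot leave two different prefixes as its root-side component. These are the same observation.
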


\begin{proof}
The unique edges of two different nested prefix cuts must be distinct:
deleting one tree edge produces one fixed root-side component, not two
different prefixes.  An edge that skips a block would cross two such cuts and
would have to be their common unique edge.  Therefore the \(n+1\) interblock
edges are distinct and join consecutive blocks.  A tree on \(2n+2\) vertices
has \(2n+1\) edges, leaving exactly \(n\) intrablock edges.  Each two-vertex
block has only the edge \(u_is_i\), so all of them are forced.  Lemma
\ref{lem:interblock} places every interblock edge on \(P\).
\end{proof}

Let \(W\) be the total length of the interblock edges, let
\[
  I=\{i:u_is_i\text{ lies on }P\},\qquad
  A=\sum_{i\in I}a_i.
\]
Lemma~\ref{lem:forced-topology} gives the exact identities
\[
  L(T)=W+4S,\qquad d_T(s_0,z)=W+4A.                \tag{17}
\]

\begin{lemma}[Bridge inequality]
\label{lem:bridge}
\(W+2A\geq X\).
\end{lemma}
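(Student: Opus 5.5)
The plan is to read off both inequalities from the forced topology of Lemma~\ref{lem:forced-topology} by \emph{telescoping along \(P\)}. First the horizontal coordinate, then the vertical coordinate, and finally a per-block accounting that combines them. By that lemma, \(P\) consists of the \(n+1\) bridges, one between each pair of consecutive blocks, together with the internal edges \(u_is_i\) for \(i\in I\).

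First I will fix the structure of \(P\) inside each block. For \(i\notin I\), the path enters and leaves block \(i\) at the same vertex \(e_i\in\{u_i,s_i\}\), since the internal edge is absent from \(P\) and exactly two bridges touch the block. For \(i\in I\), the path enters at one endpoint of \(u_is_i\) and leaves at the other. I call the block \emph{forward} if the order is \(u_i\to s_i\), so the signed horizontal displacement is \(+3a_i\), and \emph{backward} if it is \(s_i\to u_i\), with displacement \(-3a_i\). Write \(I=I_F\sqcup I_K\) for this split and \(A=A_F+A_K\) for the corresponding sums.

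\emph{Horizontal telescoping.} The signed \(x\)-displacements of the edges of \(P\) sum to \(x(z)-x(s_0)=X\). Let \(W_x\) and \(W_y\) denote the total horizontal and vertical parts of the bridge lengths, so \(W=W_x+W_y\). Then
\[
  W_x\ \geq\ X-3A_F+3A_K .
\]
\emph{Vertical telescoping.} Every terminal other than the \(u_i\) lies on \(y=0\), and \(P\) starts and ends at height \(0\). I therefore bound \(W_y\) block by block, using the height \(a_i\) of \(u_i\).

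For \(i\in I_F\), the bridge entering \(u_i\) comes from the exit vertex of block \(i-1\). That vertex is either some \(s_j\) at height \(0\), which costs vertical length \(a_i\), or \(u_{i-1}\) at height \(a_{i-1}\). For \(i\in I_K\), the bridge leaving \(u_i\) goes down to the next block's entry, and I charge it symmetrically. The target is
\[
  W_y\ \geq\ A_F-A_K ,
\]
or anything of that form that, combined with the \(3A_K\) horizontal surplus above, still gives \(W\geq X-3A_F+3A_K+A_F-A_K\geq X-2A\). The backward blocks carry a large horizontal surplus of \(6a_i\) relative to forward traversal, which should absorb any vertical loss they cause.

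The main obstacle is this vertical accounting when consecutive bridges join two upper terminals \(u_{i-1}\) and \(u_i\), so that the height difference \(|a_i-a_{i-1}|\) may be much smaller than \(a_i\). To handle it, I will first try to pair each such bridge with the adjacent block that exits or enters at the upper terminal. If the exit vertex \(u_{i-1}\) comes from a backward block, that block's \(+3a_{i-1}\) horizontal surplus pays for the deficit. If it comes from an unselected block with \(e_{i-1}=u_{i-1}\), then the bridge into that block lands at height \(a_{i-1}\), and the \(y\)-profile must climb there from an earlier \(0\); that climb is charged by the triangle inequality along the run of consecutive upper visits.

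Concretely, I will apply the vertical-telescoping argument to each maximal run of consecutive bridges with both endpoints on upper terminals. Within a run, the bridge vertical variation is at least the first height plus the last height, which dominates the contributions of the forward \(a_i\) in the run once the backward surplus terms are added. Summing the horizontal and vertical estimates then yields \(W+2A\geq X\).
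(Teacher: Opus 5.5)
Your horizontal step and your final combination are correct. Your target $W_y\ge A_F-A_K$ is exactly the inequality needed. The gap is that you never prove it.

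The vertical step is left as a plan ("I will first try", "should absorb"), and its one concrete claim is false. Take a maximal run of bridges $u_j\to u_{j+1}\to\cdots\to u_k$ between upper terminals. Its vertical variation is only $\sum_m|a_{m+1}-a_m|\ge|a_k-a_j|$, which can be $0$ (for example, equal items). It is not ``first height plus last height.'' The climb to $a_j$ and the descent from $a_k$ may be made by the internal edge of a backward or forward block rather than by a bridge. Those edges do not count toward $W_y$, so your charging of the forward $a_i$ is not established as written. Your appeal to the backward blocks' $3a_{i-1}$ horizontal surplus also reuses slack you already spent in the bound on $W_x$, without tracking it.

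The missing idea is that the vertical step needs no local accounting. Telescope $y$ exactly as you telescoped $x$:
\begin{itemize}
\item The signed $y$-changes along $P$ sum to $y(z)-y(s_0)=0$.
\item A forward internal edge $u_i\to s_i$ contributes $-a_i$, and a backward one contributes $+a_i$.
\item Hence the bridges' signed $y$-changes sum to $A_F-A_K$.
\item Each bridge's vertical length is at least the absolute value of its signed change, so $W_y\ge|A_F-A_K|\ge A_F-A_K$.
\end{itemize}
Combining with your horizontal bound gives $W+2A\ge X-3A_F+3A_K+A_F-A_K+2A_F+2A_K=X+4A_K\ge X$. This is precisely the paper's proof. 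Under this global telescoping, the ``obstacle'' of consecutive bridges between upper terminals never arises.
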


\begin{proof}
Orient \(P\) from \(s_0\) to \(z\).  Split \(I\) into \(I_+\), whose internal
edge is traversed \(u_i\to s_i\), and \(I_-\), traversed in reverse; write
\(A_+\) and \(A_-\) for their item sums.  If \(H_b,V_b\) are the total
horizontal and absolute vertical changes on the bridges, then \(W=H_b+V_b\).
Telescoping the signed \(x\)-change along \(P\) yields
\[
  H_b=X-3A_++3A_-.
\]
Telescoping signed \(y\)-change yields bridge signed change \(A_+-A_-\), so
\(V_b\geq A_+-A_-\).  Therefore
\[
\begin{aligned}
  W+2A
  &=H_b+V_b+2(A_++A_-)\\
  &\geq X+4A_-\geq X.
\end{aligned}
\]
Reverse traversal only adds slack.
\end{proof}

Using (17), Lemma~\ref{lem:bridge}, and the cost budget,
\[
  X-2A\leq W\leq B-4S=X-S,
\]
so \(A\geq S/2\).  The path-distance budget gives
\[
  X+2A\leq W+4A=d_T(s_0,z)\leq D=X+S,
\]
so \(A\leq S/2\).  Thus \(A=S/2\), and \(I\) is a valid
\textsc{Partition} witness.  This argument quantifies over every complete
Manhattan edge, including edges with noncanonical consecutive-block
endpoints.

\subsection{Classification}

The construction proves NP-hardness and, with membership in NP,
NP-completeness.  Its coordinates are proportional to \(S\), so the reduction
is numerical and establishes weak NP-hardness.  It neither proves strong
NP-hardness nor rules out a pseudopolynomial algorithm.

%% file: appendix_height.tex
\section{Height Partition and the Factor-Two Boundary}
\label{app:height}

\subsection{Lower bounds and edge-cut facts}

Let \(d\) be any finite metric, let \(r\) be the root, and define
\[
  M=\operatorname{MST}(V,d),\qquad
  \Delta=\max_{v\in V}d(r,v).
\]
Every terminal tree \(T\) satisfies \(M\leq L(T)\) by MST optimality and
\(\Delta\leq R(T)\) by the triangle inequality along each root path.  These
are the two independent lower bounds used by the approximation ratio.

Fix an MST \(T_M\) rooted at \(r\).  For every parent edge
\((u,v)\) of weight \(e_v\), deleting the edge exposes a cut containing \(v\)
but not \(r\).  The complete-metric edge \((r,v)\) crosses this cut, so the
MST cut property gives
\[
  e_v\leq d(r,v)\leq\Delta.                         \tag{18}
\]
The nonnegative net cost of replacing \((u,v)\) by \((r,v)\) is therefore
\[
  a_v=d(r,v)-e_v\leq\Delta.                         \tag{19}
\]

\subsection{Residual invariant}

The mathematical construction permits any real \(H>0\); the integer
implementation evaluates exact integer thresholds.  Process vertices in
postorder.  At the time a vertex \(v\) is processed, its \emph{open residual}
\(C_v\) consists of \(v\) and precisely those descendant residuals whose
joining edges have been retained.  Let
\[
  W_v=\sum_{e\in E(C_v)}d(e),\qquad
  h_v=\max_{x\in C_v}d_{C_v}(v,x).
\]
The algorithm retains \((u,v)\) iff
\[
  e_v+h_v\leq H.                                   \tag{20}
\]
If retained, \(C_v\) is merged into \(C_u\); otherwise it is closed,
\((u,v)\) is removed, and \((r,v)\) is installed.

\begin{lemma}[Residual height]
\label{lem:residual-height}
Every open or closed residual has height at most \(H\).  The final graph is a
terminal spanning tree.
\end{lemma}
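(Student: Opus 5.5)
We prove both claims together by induction over the postorder, maintaining an invariant. The invariant says that when \(v\) is processed, \(C_v\) is a subtree of \(T_M\) containing \(v\). The recorded value \(h_v\) equals \(\max_{x\in C_v}d_{C_v}(v,x)\), and this value is at most \(H\). Every closed residual is a subtree of \(T_M\) that is joined to \(r\) by exactly one installed root edge.

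\textbf{Base case.} A leaf has \(C_v=\{v\}\) and \(h_v=0\leq H\).

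\textbf{Retained child.} When a child \(v\) of \(u\) is retained, test (20) gives \(e_v+h_v\leq H\). In \(C_u\), every vertex \(x\) of \(C_v\) satisfies \(d_{C_u}(u,x)=e_v+d_{C_v}(v,x)\), because \(C_v\) hangs below \(u\) through the single edge \((u,v)\). So the update \(h_u\gets\max\{h_u,e_v+h_v\}\) keeps \(h_u\) equal to the true height of the growing \(C_u\), and that height stays at most \(H\). All children of \(u\) precede \(u\) in postorder, so \(C_u\) is final when \(u\) is processed. It also stays connected, since each merge adds a subtree through one tree edge.

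\textbf{Cut child.} When the child is cut, \(C_v\) closes with height \(h_v\leq H\) and is never modified again. The root's residual \(C_r\) is never tested, because \(r\) is excluded from the loop. It is built only from retained merges, so its height is also at most \(H\).

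\textbf{Spanning tree.} The retained MST edges form a forest whose components are exactly the residuals, and each vertex lies in exactly one residual. Suppose there are \(k\) cuts. Then there are \(k+1\) residuals: \(C_r\) plus one closed residual per cut vertex \(v\), namely the component containing \(v\). Each cut deletes one edge and adds one edge, so the edge count stays \(|V|-1\). Each added edge \((r,v)\) joins a closed residual to \(C_r\), which is a distinct component. A swap of this kind never creates a cycle: inducting over the cuts in processing order, the added edge connects a component not yet containing \(r\) to the component containing \(r\). Hence \(A_H\) is connected with \(|V|-1\) edges, and so it is a spanning tree.

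The main obstacle is the bookkeeping that makes \(h_v\) the true height of \(C_v\), rather than just an upper bound on the branch through some child. The key point is that distances from \(u\) into a merged child residual factor through the single edge \((u,v)\). I would state this factorization explicitly as a sublemma before running the induction.
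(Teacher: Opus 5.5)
Your proof is correct and follows essentially the same route as the paper. It runs an induction from leaf residuals using the keep test \(e_v+h_v\le H\), then observes that each cut removes one edge and reconnects exactly the newly closed residual to the root, which yields a connected graph with \(n-1\) edges; your factorization \(d_{C_u}(u,x)=e_v+d_{C_v}(v,x)\), showing that the recorded \(h_v\) is the exact residual height, makes explicit a step the paper leaves implicit.
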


\begin{proof}
A leaf residual has height zero.  Inductively, every retained child residual
has height from its parent at most \(e_v+h_v\leq H\), so their union with the
parent also has height at most \(H\).  Cutting an edge closes exactly one
connected residual.  Starting from a tree, each cut increases the number of
components by one and each root shortcut reconnects exactly the newly closed
component; no shortcut connects two vertices already in the same residual.
The final graph is therefore connected with \(n-1\) edges and is a tree.
\end{proof}

\begin{lemma}[Radius]
\label{lem:height-radius}
The returned tree \(A_H\) satisfies \(R(A_H)\leq\Delta+H\).
\end{lemma}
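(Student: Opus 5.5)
The plan is to bound every root path in \(A_H\) by walking it through the final residual decomposition supplied by Lemma~\ref{lem:residual-height}. After the postorder pass, the vertex set is partitioned into closed residuals. These are the root residual \(C_r\), i.e.\ the open residual at \(r\) when the loop ends, together with one residual \(C_v\) for each cut child \(v\). The edges of \(A_H\) are exactly the retained MST edges inside these residuals plus one shortcut \((r,v)\) per cut child \(v\). Since \(A_H\) is a tree, the root path to any terminal \(x\) is unique, so it suffices to exhibit one walk from \(r\) to \(x\) of length at most \(\Delta+H\).

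\emph{Case \(x\in C_r\).} The residual \(C_r\) is a connected subtree of \(T_M\) containing \(r\), and all its edges are retained in \(A_H\). Hence \(d_{A_H}(r,x)\le d_{C_r}(r,x)\le h_r\le H\le\Delta+H\). The middle inequality is the definition of residual height with top vertex \(r\); the bound \(h_r\le H\) is the invariant of Lemma~\ref{lem:residual-height} applied to the root residual.

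\emph{Case \(x\in C_v\) for a cut child \(v\).} The walk first takes the installed shortcut \((r,v)\), of length \(d(r,v)\le\Delta\) by the definition of \(\Delta\). It then descends inside \(C_v\) along retained edges, of length \(d_{C_v}(v,x)\le h_v\). Because \(C_v\) was closed at the moment \(v\) was processed, \(h_v\) is its final height, and Lemma~\ref{lem:residual-height} gives \(h_v\le H\). Taking the maximum over both cases yields \(R(A_H)\le\Delta+H\).

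The only step needing care is confirming that the residuals really are frozen once closed, so that the height certified at closing time is the height in the final tree. Two facts must be checked. First, a closed residual's edges are never removed later, because subsequent postorder steps only touch parent edges of vertices outside it. Second, no later retention adds vertices to it, because the only way to enlarge \(C_v\) would be through its severed parent edge. With that established, together with the fact that each vertex lies in exactly one final residual, the two-case bound is immediate.
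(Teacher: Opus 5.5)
Your proof is correct and follows essentially the same route as the paper. The root residual is reached within height at most \(H\) by Lemma~\ref{lem:residual-height}, and every other residual is reached by its shortcut \((r,v)\) of length at most \(\Delta\) followed by a residual path of length at most \(H\); your added check that closed residuals stay frozen during the postorder pass makes explicit a point the paper leaves implicit.
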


\begin{proof}
The root residual is reached without a shortcut and has height at most \(H\).
Every other residual is reached through its unique root edge \((r,v)\), of
length at most \(\Delta\), followed by a residual path of length at most
\(H\).  Taking the maximum gives the claim.
\end{proof}

\subsection{Disjoint charge}

For a cut child \(v\), use as its resource the deleted edge together with the
edges of its closed residual:
\[
  Q_v=\{(u,v)\}\cup E(C_v),\qquad
  w(Q_v)=e_v+W_v.
\]
The failed test and \(W_v\geq h_v\) imply
\[
  w(Q_v)=e_v+W_v\geq e_v+h_v>H.                    \tag{21}
\]
Combining (19) and (21),
\[
  a_vH\leq\Delta\,w(Q_v).                          \tag{22}
\]

\begin{lemma}[Disjointness]
\label{lem:height-disjoint}
The sets \(Q_v\) over all cut children are pairwise edge-disjoint, and their
total weight is at most \(M\).
\end{lemma}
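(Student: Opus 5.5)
The plan is to view each \(Q_v\) as a set of edges of the rooted MST \(T_M\) and to show that every MST edge lies in at most one of them; the weight bound then follows by summing over distinct edges.

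First we fix the structure of the sets \(Q_v\).  Every edge of \(T_M\) is either its parent edge \((\operatorname{parent}(x),x)\) for a unique nonroot vertex \(x\).  We classify such an edge as \emph{retained} if it passes test~(20) and as \emph{cut} otherwise.  A closed residual \(C_v\) contains only retained edges, because those are the only edges by which descendant residuals are merged in.  The deleted edge \((u,v)\) of \(Q_v\) is the cut edge owned by \(v\) itself.

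Next we show that each MST edge lands in at most one \(Q_v\), splitting into two cases.
\begin{enumerate}
\item \emph{A cut edge \((u,x)\).}  It belongs to \(Q_x\) as the deleted edge.  It cannot lie in any \(E(C_v)\), since residual edges are retained.  It also cannot be the deleted edge of any \(v\neq x\), since \(x\) owns exactly one parent edge.
\item \emph{A retained edge \((\operatorname{parent}(x),x)\).}  It lies in the unique residual containing both of its endpoints.  By Lemma~\ref{lem:residual-height}, the residuals partition \(V\) into the components of the retained-edge forest, so exactly one residual contains that edge.  At most one \(Q_v\) uses that residual, since each closed residual \(C_v\) is indexed by its unique top vertex \(v\).
\end{enumerate}
One caveat: the root residual is never charged.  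This causes no difficulty, because we only need the bound ``at most one.''

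From these two cases the sets \(Q_v\) are pairwise edge-disjoint subsets of \(E(T_M)\).  Edge weights are nonnegative, so
\[
\sum_v w(Q_v)\le \sum_{e\in T_M} d(e)=M .
\]

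The main obstacle is making precise that residuals are disjoint as edge sets even though the algorithm works incrementally: \(C_u\) absorbs \(C_v\) only when \((u,v)\) is retained, and a closed residual is never absorbed later.  I would state this as the invariant ``after processing \(v\), the open residuals are exactly the components of the retained-so-far forest restricted to processed subtrees,'' and prove it by induction along the postorder.
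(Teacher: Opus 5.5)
Your proof is correct and is essentially the paper's argument: both rest on two facts, that a cut edge never lies inside a residual and that a closed residual is never absorbed by an ancestor, so each MST edge is charged at most once. Your per-edge classification (cut edges are charged only to their owner; retained edges are charged only to the unique component of \(T_M\) minus the cut edges) makes explicit the paper's case split into ancestor pairs and different-branch pairs. One small fix: Lemma~\ref{lem:residual-height} does not itself say that the residuals are the components of the retained-edge forest. That fact should come from the postorder induction in your last paragraph, and it follows immediately from the recursive definition of \(C_v\).
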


\begin{proof}
When \(C_v\) is closed, its edges no longer belong to the open residual passed
to any ancestor.  A later cut can charge its own joining edge and the edges
that remain in its open residual, but none already closed below it.  Two cuts
in different branches have disjoint subtree edges.  Hence no MST edge is
charged twice, and all charged edges belong to \(T_M\).
\end{proof}

The output differs from \(T_M\) only at the cut children.  Summing (22) and
using Lemma~\ref{lem:height-disjoint} yields
\[
\begin{aligned}
  \bigl(L(A_H)-M\bigr)H
  &=H\sum_{v\ {\rm cut}}a_v\\
  &\leq\Delta\sum_{v\ {\rm cut}}w(Q_v)
   \leq M\Delta.
\end{aligned}
\tag{23}
\]
Together with Lemma~\ref{lem:height-radius}, this proves
\[
  L(A_H)\leq M\left(1+\frac{\Delta}{H}\right),
  \qquad
  R(A_H)\leq\Delta+H.                              \tag{24}
\]
At \(H=\Delta\), (24) gives \(L\leq2M\) and
\(R\leq2\Delta\).  If \(n=1\), then \(M=\Delta=L=R=0\) and the empty tree is
handled separately.  For \(n>1\) with distinct terminals, \(\Delta>0\).

The output topology changes only when \(H\) crosses one of finitely many
residual heights.  Nevertheless, (24) holds for every real \(H>0\), giving
the continuous certificate envelope plotted in
Figure~\ref{fig:height-tradeoff}.  The released mode samples five integer
thresholds and always includes the exact \(H=\Delta\) member.

\subsection{Why metric and MST arguments alone stop at two}

\begin{theorem}[Arbitrary-metric one-tree lower bound]
\label{thm:metric-lower}
For every \(c<2\), there is a finite positive-integer metric in which every
terminal tree \(A\) satisfies
\[
  \max\{L(A)/M,\ R(A)/\Delta\}>c.
\]
\end{theorem}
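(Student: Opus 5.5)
The plan is to build a \emph{trunk-and-arms} shortest-path metric. In it, MST optimality and metricity leave a gap of exactly a factor two between the length lower bound \(M\) and the length any radius-feasible tree must pay. Fix \(c<2\) and put \(\delta=2-c>0\). Work first with rational lengths, then multiply through by a common denominator to get a positive-integer metric; both ratios \(L/M\) and \(R/\Delta\) are scale invariant.

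\emph{The graph.} Take a root \(r=t_0\) and a trunk path \(t_0,t_1,\ldots,t_N\) with edges of length \(1/N\). Hang \(k\) arms off \(t_N\); arm \(a\) is a path \(x_{a,1},\ldots,x_{a,N}\) with edges of length \(1/N\), so \(x_{a,j}\) sits at arm depth \(j/N\). Add a direct edge of length \(1\) from \(r\) to every arm vertex. Let \(d\) be the shortest-path metric of this weighted graph, which is automatically a metric. Then every arm vertex satisfies \(d(r,x_{a,j})=\min\{1+j/N,\,1\}=1\), and every trunk vertex is within \(1\), so \(\Delta=1\). The trunk and arm edges form a spanning tree of length \(1+k\). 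Every other pair is at distance at least \(1/N\), which is the minimum edge length, so Kruskal's algorithm finds an MST using only \(1/N\)-edges. Hence \(M=1+k\).

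\emph{The lower bound.} Let \(A\) be any terminal tree with \(R(A)\le c\); I will show \(L(A)>c(1+k)\) once \(k\) and \(N\) are large. Call the \emph{deep part} \(D_a\) of arm \(a\) the set of its vertices at arm depth greater than \(c-1=1-\delta\). Pick \(x\in D_a\), and let \((w,y)\) be the edge on which the \(A\)-path from \(r\) first enters \(D_a\). If every shortest \(w\)-to-\(y\) route in the underlying graph passes through \(t_N\), then \(d_A(r,y)\ge d(r,w)+d(w,y)\ge d(r,t_N)+d(t_N,y)=1+\text{depth}(y)>c\). This is a contradiction, so some shortest route uses a direct root edge. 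Two cases follow: either \(w=r\) and \(d(w,y)=1\), or \(w\) lies in another arm or in the shallow part of the same arm and the edge still has length close to \(1\). I will make this precise as \(d(w,y)\ge 1-\delta-O(1/N)\). Next, build from \(A\) a spanning subgraph by deleting one such entering edge per arm, \(k\) edges in total. Reconnecting the resulting pieces costs at most \(O(k/N)\) using \(1/N\)-edges along the arms. This gives \(L(A)\ge M-O(k/N)+k(1-\delta)-O(k/N)\). Choosing \(\delta\)'s complement appropriately (working with any \(c'\in(c,2)\) in place of \(c\) if needed), then \(k\) large so that \(\frac{1+k+k(1-\delta)}{1+k}>c\), and finally \(N\gg k\), yields \(L(A)>cM\). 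So every tree has either \(R>c\Delta\) or \(L>cM\).

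\emph{Main obstacle.} The hardest step is the entering-edge charge. Entries can come from a different arm, or the same edge could in principle be charged for two arms, so I must ensure the \(k\) charged long edges are distinct. They are: each is the first edge into a distinct set \(D_a\), with endpoint \(y\in D_a\). I must also ensure that deleting them does not destroy the accounting of the remaining length. For this I will argue via cuts: the edge set of \(A\) minus the \(k\) charged edges still crosses every trunk and arm \(1/N\)-gap cut except at most \(O(k)\) of them, so its length is at least \(M-O(k/N)\). If the cut argument proves delicate, my fallback is to make the arms and trunk geometrically separated. I would do this by setting inter-arm distances to \(2\) through \(r\) and lengthening only the trunk-to-arm connection, so that any non-root entry edge is visibly \(\ge 1-\delta\).
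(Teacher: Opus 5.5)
\paragraph{Verdict.} Your metric is sound. After scaling by $N$ it is essentially the paper's trunk-and-arms construction, with trunk and arms of length $\Delta$ instead of $2q+1$ and $q+1$. The gap is in the charging step, which is wrong as stated.

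\paragraph{Where the charge fails.} The dichotomy ``some shortest $w$--$y$ route passes through $t_N$ or uses a root edge'' fails when $w$ already lies in arm $a$, because the route then simply runs along the arm. Here is a concrete instance. Take $c=3/2$ and $N$ even, and let the tree consist of the trunk, every arm path $x_{a,1}\cdots x_{a,N}$, and one shortcut $(r,x_{a,N/2})$ per arm. Its radius is exactly $3/2=c\Delta$. Yet the path to $x_{a,N}$ enters $D_a$ through $(x_{a,N/2},x_{a,N/2+1})$, which has length $1/N$. So the claimed bound $d(w,y)\ge 1-\delta-O(1/N)$ is false, and your $k$ charged edges can all have length $1/N$.

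\paragraph{The arithmetic also fails.} Independently of the above, a charge of $1-\delta$ per arm cannot finish the proof. For $1\le c<2$,
\[
\frac{1+k+k(1-\delta)}{1+k}=1+(1-\delta)\frac{k}{k+1}<2-\delta=c
\]
for every $k$. So your displayed choice of $k$ is impossible without the $c'$ detour.

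\paragraph{The fix.} Charge the entry into the whole arm rather than into its deep part; this is the paper's key observation. On the tree path from $r$ to the endpoint $x_{a,N}$, let $(w,y)$ be the first edge with $y$ in arm $a$ and $w$ outside it. In your generating graph, arm $a$ meets the rest only through the edge $t_Nx_{a,1}$ and through root edges. Suppose $d(w,y)<1$. Then no route via $r$ can be shortest, so a shortest route enters at $t_N$. That gives
\[
d_A(r,x_{a,N})\ge d(r,t_N)+\mathrm{depth}(y)+(1-\mathrm{depth}(y))=2>c.
\]
Hence $d(w,y)\ge 1=\Delta$. These edges are distinct because their child endpoints lie in different arms.

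\paragraph{Finishing the count.} You need no cut or reconnection argument. Since $M=(n-1)/N$ and every pairwise distance is at least $1/N$, plain counting gives $L(A)\ge M+k(1-1/N)$. Then $1+\frac{k(1-1/N)}{k+1}>2-\delta$ once $k$ and $N$ are large. The paper runs the same count with a trunk of length $2q+1>2\Delta$ and arms of length $q+1$. That choice means even $R\le 2\Delta$ already forces one root shortcut landing strictly inside each arm.
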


\begin{proof}
The case \(c<1\) is immediate, so write \(\varepsilon=2-c\in(0,1]\).
Choose
\[
  q=\max\{2,\lceil4/\varepsilon\rceil-1\},\qquad m=q+1.
\]
Form a generating graph with a unit-edge trunk of length \(2q+1\) from root
\(r\) to a branch vertex, and attach \(m\) disjoint unit-edge arms of length
\(q+1\).  Add an edge of weight \(q\) from \(r\) to every other vertex, and
take the shortest-path metric on all vertices.

The unit trunk-and-arms tree has
\[
  M=2q+1+m(q+1),                                   \tag{25}
\]
because every nonzero metric distance is at least one and it already has
\(n-1\) unit edges.  The root shortcuts give \(\Delta\leq q\), while remote
vertices have distance exactly \(q\), hence \(\Delta=q\).

Take an arbitrary complete-metric terminal tree \(A\).  Expand each edge into
a shortest path in the generating graph, take the union, and then take a root
shortest-path tree \(B\) of that union.  This operation cannot increase either
objective:
\[
  L(B)\leq L(A),\qquad R(B)\leq R(A).               \tag{26}
\]
If \(R(A)>2q\), then \(R(A)/\Delta>2>c\).
Otherwise every path in \(B\) to an arm endpoint has length at most \(2q\).
A path beginning with the unit trunk has length \(3q+2\); one beginning with
a \(q\)-edge to the trunk, branch, or another arm has length at least
\(2q+1\).  Thus reaching each arm endpoint requires a distinct \(q\)-edge
whose nonroot endpoint lies strictly inside that arm.  Tree \(B\) contains at
least \(m\) such edges.  Its other \(n-1-m\) edges have weight at least one,
so
\[
  L(A)\geq L(B)\geq M+m(q-1).                      \tag{27}
\]
Combining the two cases,
\[
  \max\{L(A)/M,R(A)/\Delta\}
  \geq
  1+\frac{m(q-1)}{2q+1+m(q+1)}.                    \tag{28}
\]
The choice of \(q,m\) makes the right-hand side exceed
\(2-\varepsilon=c\).  Indeed \(\varepsilon(q+1)\geq4\), and direct
rearrangement reduces the desired inequality to
\[
  m>
  \frac{(1-\varepsilon)(2q+1)}
       {\varepsilon(q+1)-2},
\]
whose right-hand side is less than \((2q+1)/2<q+1=m\).
The construction is finite; it has
\[
  n=q^2+4q+3<35/\varepsilon^2
\]
vertices.
\end{proof}

This theorem is intentionally qualified.  The metric need not embed in the
Manhattan plane, and the result concerns one tree simultaneously compared
with the MST and star lower-bound witnesses.  Different members of a
portfolio may cover different comparison trees.  Neither a planar-\(L_1\)
factor below two nor an arbitrary-metric portfolio lower bound follows.

%% file: appendix_algorithm.tex
\section{Algorithmic Contract and Exact Subroutines}
\label{app:algorithm}

This appendix fixes the boundary between the mathematical construction, the
bounded empirical search, and the object that is finally emitted.  The
description follows the released implementation rather than an idealized
variant of it.

\subsection{Parent arrays and exact remeasurement}
\label{app:parent-validation}

Let the terminals be indexed by \(V=\{0,\ldots,n-1\}\), with root \(r=0\).
A candidate is a complete parent array
\[
  p[0]=-1,\qquad
  p[v]\in V\setminus\{v\}\quad (v\neq0).
\]
The validator first builds the child lists induced by \(p\).  A traversal
from \(0\) must visit every terminal exactly once; a repeated visit is a
cycle, and an unvisited terminal is disconnected.  Hence a successful array
contains exactly \(n-1\) arcs and represents one rooted terminal tree.

For every successful array, the validator discards all incremental objective
estimates and recomputes
\[
  \ell_v=|x_v-x_{p[v]}|+|y_v-y_{p[v]}|,\qquad
  L(p)=\sum_{v\neq0}\ell_v,
\]
and, during the same rooted traversal,
\[
  q_0=0,\qquad q_v=q_{p[v]}+\ell_v,\qquad
  R(p)=\max_{v\in V}q_v.                                     \tag{29}
\]
Exact parent-array duplicates are removed first.  After remeasurement,
candidates are ordered lexicographically by
\((L,R,p)\); equal metric pairs are coalesced, and a monotone scan in
increasing \(L\) retains a point precisely when its \(R\) is strictly smaller
than every preceding radius.  Thus neither a construction tag nor an
incremental move score can affect feasibility or Pareto dominance.

\paragraph{Arithmetic domain.}
The certified domain consists of distinct terminals whose two coordinate
tokens lie in
\([-2^{63},2^{63}-1]\).  Coordinate differences are widened before
subtraction.  A Manhattan edge is therefore smaller than \(2^{65}\).
Because the implementation uses an \(n<2^{31}\) terminal index, every
\((n-1)\)-edge tree length and every root path is smaller than \(2^{96}\);
signed 128-bit accumulators contain (29) exactly.  Cross-products used for
ratio comparisons, area selection, and certificate inequalities use
arbitrary-precision integers.  No floating-point value participates in
validation, dominance, or witness selection.

The \textsc{certified} mode rejects a coordinate token outside the stated
domain.  For such a token the empirical interfaces return only the valid
terminal star fallback, before entering fixed-width metric arithmetic; that
fallback is not covered by the certified-mode claim.  Duplicate terminals
are rejected in every mode.  The requested portfolio cap is clamped to
\(1\leq k\leq64\).

\subsection{Certificate-preserving selection}
\label{app:certified-selection}

Let \(C\) be the exactly remeasured height-partition tree for
\(H=\Delta\), and let \(\mathcal C\) contain the MST, the star, and all five
height-partition candidates used by the certified mode.  By
Theorem~\ref{thm:height-partition},
\[
  L(C)\leq2M,\qquad R(C)\leq2\Delta.                           \tag{30}
\]
The certified selector is separate from the empirical portfolio selector.
If \(k=1\), it returns the parent array of \(C\) itself.  If \(k>1\), it first
forms the exact nondominated frontier \(\mathcal F\) of \(\mathcal C\) and
chooses
\[
  C^\star\in
  \arg\min_{\substack{T\in\mathcal F\\
                      L(T)\leq L(C),\ R(T)\leq R(C)}}
       \bigl(L(C)-L(T)\bigr)+\bigl(R(C)-R(T)\bigr),            \tag{31}
\]
breaking a remaining tie by the parent array.  Selection begins with
\(C^\star\), adds the two frontier endpoints if space remains, and fills
remaining slots by the exact area rule described below.  It never deletes
\(C^\star\).

\begin{proposition}[Witness preservation]
\label{prop:witness-preservation}
For every supported instance and every \(k\geq1\), the certified output
\(\mathcal P_{\rm cert}\) contains a tree \(T\) satisfying (30).  For \(k=1\),
that tree has exactly the parent array produced by
\textsc{HeightPartition}\((T_M,r,\Delta)\).
\end{proposition}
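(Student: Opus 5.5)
The plan is to split on \(k\) and reduce each case to Theorem~\ref{thm:height-partition} together with the exact-remeasurement contract of Appendix~\ref{app:parent-validation}.

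First I would fix the supported domain: distinct terminals with coordinates in the certified range. For \(n=1\) the empty tree gives \(L=R=M=\Delta=0\), so (30) holds trivially and the single returned array is the one \textsc{HeightPartition} produces. For \(n>1\), distinct terminals force \(\Delta>0\), so \(H=\Delta\) is a legal threshold. Since \textsc{DensePrim} returns a genuine MST, Theorem~\ref{thm:height-partition} applies to \(C=\textsc{HeightPartition}(T_M,r,\Delta)\) and gives (30) for its mathematical lengths. The arithmetic-domain paragraph guarantees that the 128-bit remeasured values \(L(C),R(C)\) equal those mathematical lengths exactly. Hence the remeasured \(C\) satisfies (30).

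\textbf{Case \(k=1\).} The selector returns the parent array of \(C\) verbatim. Both claims of the proposition then follow at once, because \(C\) satisfies (30).

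\textbf{Case \(k>1\).} I must show the \(\arg\min\) in (31) is over a nonempty set and that its choice satisfies (30).
\begin{enumerate}
\item \emph{Nonemptiness.} \(C\) belongs to \(\mathcal C\). The monotone nondominance scan keeps either \(C\) or a point with \(L\le L(C)\) and \(R\le R(C)\); equal metric pairs are coalesced to a representative with the same pair. So \(\mathcal F\) contains some \(T\) that weakly dominates \(C\), and the feasible set in (31) is nonempty.
\item \emph{Transitivity.} Any \(C^\star\) chosen in (31) satisfies \(L(C^\star)\le L(C)\le 2M\) and \(R(C^\star)\le R(C)\le 2\Delta\).
\item \emph{Persistence.} Selection inserts \(C^\star\) first and never deletes it. The cap \(k\) is clamped to \(k\ge1\), so \(C^\star\) survives truncation.
\end{enumerate}

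The main obstacle is the nonemptiness step, specifically the interaction between coalescing and the scan. A naive "strictly smaller radius" rule could drop \(C\) when an earlier point has equal \(R\) and smaller \(L\). I would argue this case directly: that earlier point dominates \(C\), so the lexicographic \((L,R,p)\) order and the scan invariant still leave some retained point dominating \(C\). This is the single place where the implementation contract, not the theorem, carries the argument.
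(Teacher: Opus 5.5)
Your proposal is correct and follows essentially the same route as the paper. The \(k=1\) case is the selector's explicit first branch. For \(k>1\), the presence of \(C\) in the candidate set forces the frontier to contain \(C\) or a weak dominator, so the feasible set in (31) is nonempty and the reserved \(C^\star\) inherits (30). Your extra checks are sound details that the paper leaves implicit: the \(n=1\) case, the exactness of the 128-bit remeasurement, and the running-minimum invariant of the nondominance scan.
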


\begin{proof}
The statement for \(k=1\) is the selector's explicit first branch.  For
\(k>1\), the finite candidate set contains \(C\), so its nondominated
frontier contains either \(C\) or a tree that Pareto-dominates \(C\).
Consequently the feasible set in (31) is nonempty.  The chosen \(C^\star\)
inherits both inequalities in (30) and is inserted before any downsampling;
the certified selector reserves its index throughout truncation.
\end{proof}

There is a second, independent check immediately before output.  It rebuilds
the canonical MST and \(C\), recomputes \(M,\Delta,L(C),R(C)\), remeasures
every returned parent array, and verifies that at least one returned pair
dominates \((L(C),R(C))\).  When \(k=1\), it additionally checks parent-array
identity with \(C\).  This check does not prove
Theorem~\ref{thm:height-partition}; it tests that the executable path has
preserved the theorem's witness.

\subsection{Exact centered partitioning of a fixed MST}
\label{app:centered-dp}

The centered dynamic program is exact for a restricted family that is useful
as an empirical anchor.  Fix a rooted MST \(T_M\), its edge weights \(w\), and
a radius cap \(D\).  Delete any subset \(Q\subseteq E(T_M)\).  The remaining
components must retain their MST edges.  The component containing \(r\) is
centered at \(r\); every other component \(C\) chooses one center
\(c_C\in C\) and receives the direct edge \((r,c_C)\).  Such a partition has
\[
  L=M-\sum_{e\in Q}w(e)
       +\sum_{C\neq C_r}d(r,c_C),                              \tag{32}
\]
and is feasible exactly when
\[
  d(r,c_C)+d_{T_M}(c_C,v)\leq D
  \quad\text{for every }v\in C,                               \tag{33}
\]
with \(c_{C_r}=r\).  Here \(d_{T_M}\) denotes distance in the fixed MST, not
Manhattan distance along a newly optimized tree.

For a vertex \(u\), write \(T_u\) for its rooted subtree and
\(\operatorname{ch}(u)\) for its children.  The open state \(F_u(c)\) is the
minimum net change relative to the retained MST edges inside \(T_u\),
conditioned on \(u\) belonging to an as-yet-open component centered at
\(c\).  The center may lie outside \(T_u\), because that component can
continue through the parent edge.  Set \(F_u(c)=+\infty\) whenever
\[
  d(r,c)+d_{T_M}(c,u)>D.                                      \tag{34}
\]
For a feasible state, define the cost of closing a component at \(u\) by
\[
  G_u=\min_{z\in T_u}\{d(r,z)+F_u(z)\}.                        \tag{35}
\]
The postorder recurrence is
\[
 F_u(c)=
 \sum_{v\in\operatorname{ch}(u)}
 \begin{cases}
   F_v(c),
     & c\in T_v,\\[2mm]
   \min\{F_v(c),\,G_v-w(u,v)\},
     & c\notin T_v,
 \end{cases}                                                  \tag{36}
\]
subject to (34).  A leaf has value \(0\) for every state satisfying (34).
The optimum length in the centered family is
\[
  L_{\rm cent}(D)=M+F_r(r).                                   \tag{37}
\]
The first alternative in (36) retains \((u,v)\).  The second alternative
either retains it, or cuts it, closes the child component using (35), and
subtracts the removed MST edge.  The direct root edge of a closed component
is already included by the term \(d(r,z)\) in (35).

\begin{theorem}[Exactness of the centered DP]
\label{thm:centered-dp}
For fixed \(T_M\) and \(D\), recurrence (34)--(37) returns a minimum-length
tree among all feasible centered partitions of \(T_M\).  Its reconstruction
has radius at most \(D\).  With all pairwise tree distances generated in
quadratic time, the dynamic program uses \(O(n^2)\) time and \(O(n^2)\)
memory.
\end{theorem}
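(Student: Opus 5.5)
The plan is to prove Theorem~\ref{thm:centered-dp} by a bijection between feasible centered partitions of \(T_M\) and the choices the recurrence can make. We then argue by induction in postorder that \(F_u(c)\) equals the minimum restricted cost, and finish with an operation count.

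\textbf{Step 1: restricted subproblem.} For a vertex \(u\) and a candidate center \(c\), define a \emph{partial centered partition} of \(T_u\) with open center \(c\):
\begin{itemize}
  \item a set \(Q_u\) of deleted edges inside \(T_u\);
  \item a center \(z\in C\) for every component \(C\) of \(T_u-Q_u\) not containing \(u\), satisfying (33);
  \item the component of \(u\) is labelled with \(c\), and every vertex \(x\) in that component satisfies \(d(r,c)+d_{T_M}(c,x)\leq D\).
\end{itemize}
Its value is \(-\sum_{e\in Q_u}w(e)+\sum_{\text{closed }C}d(r,z_C)\). By (32), a complete feasible centered partition of \(T_M\) is exactly such an object at \(u=r\) with \(c=r\), and its length is \(M\) plus this value. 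The claim is therefore \(F_u(c)=\) the minimum value over partial partitions, with \(+\infty\) if none exists. Then (37) follows.

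\textbf{Step 2: induction.} At a leaf, the only partial partition is the open singleton, whose value is \(0\). Its feasibility is exactly the negation of (34).

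For an internal \(u\), a partial partition decomposes over the children \(v\). Each edge \((u,v)\) is either retained or cut.
\begin{itemize}
  \item \emph{Retained.} The child's partial partition has open center \(c\). If \(c\in T_v\), connectivity forces the edge to be retained, because the component of \(u\) must contain \(c\) when \(c\) lies below. This gives the first case of (36).
  \item \emph{Cut.} The child's open component becomes closed with some center \(z\in T_v\) and contributes \(d(r,z)-w(u,v)\). Feasibility of the closed component is exactly the constraint (34) imposed on \(F_v(z)\), since (34) is checked at every vertex of the open component along the induction. Minimising over \(z\) gives \(G_v\) from (35).
\end{itemize}
The children's choices are independent given \(c\), so the minimum of the sum is the sum of the minima. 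The constraint at \(u\) itself is (34).

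\textbf{Main obstacle.} The delicate point is that the center \(c\) of the open component may lie outside \(T_u\), or in a sibling subtree. I must check that the decomposition never double-counts the center and never lets a retained component hold two centers. Forcing retention when \(c\in T_v\) handles this: the component of \(u\) is connected and contains \(c\), and \(c\) lies in exactly one child subtree or is \(u\) itself. A further subtle point is that \(c\) must actually belong to the component it centers. When the recurrence is started at \(F_r(r)\), or closed via \(G_v\) with \(z\in T_v\), the forced-retention chain connects \(z\) to \(v\). This guarantees that every reachable state corresponds to a genuine centered partition.

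\textbf{Reconstruction and radius.} Backtracking the argmins yields such a partition. Each closed component has root path length \(d(r,z)+d_{T_M}(z,x)\leq D\) by (33). The root component satisfies \(d_{T_M}(r,x)\leq D\) by (34) with \(c=r\).

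\textbf{Complexity.} Precompute all \(d_{T_M}\) distances by \(n\) tree traversals in \(O(n^2)\) time. There are \(n^2\) states \((u,c)\). The sum over children of \(u\) of the transitions is charged once per (edge, \(c\)) pair, giving \(O(n^2)\). Each \(G_v\) is an \(O(|T_v|)\) scan, which is \(O(n^2)\) in total. Testing \(c\in T_v\) takes \(O(1)\) using Euler-tour intervals. The DP table uses \(O(n^2)\) memory.
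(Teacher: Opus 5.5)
Your proposal is correct and follows essentially the same route as the paper. Both use induction over rooted subtrees, forced retention of \((u,v)\) when \(c\in T_v\), and otherwise a keep-or-cut choice closed through \(G_v\); children contribute additively because they are edge-disjoint, (34) at every open-component vertex recovers (33), and Euler intervals give the same \(O(n^2)\) count. Your Step~1 definition should also state that \(c\) lies in the component of \(u\) whenever \(c\in T_u\), a condition your forced-retention argument already supplies.
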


\begin{proof}
Proceed by induction over the rooted subtrees.  Fix a feasible state
\((u,c)\).  If \(c\in T_v\), the unique \(u\)-to-\(c\) path in \(T_M\)
contains \((u,v)\), so that edge must remain and the child has state
\((v,c)\).  If \(c\notin T_v\), either the child remains in the open
component, again giving \(F_v(c)\), or \((u,v)\) is cut.  In the latter case
the entire child subtree is independent of the other children and its top
component must choose some center \(z\in T_v\); (35) gives its optimum direct
attachment and the term \(-w(u,v)\) accounts for the cut.  The child choices
are edge-disjoint, so their net changes add, proving (36).

Conversely, every choice in (36) either keeps or cuts exactly one child edge.
Induction reconstructs a valid centered partition, so the recurrence neither
omits nor introduces a member of the family.  Condition (34), applied at
every vertex of every open component, is precisely (33).  At the root,
\((r,r)\) leaves the root component open without adding a root self-edge,
which proves (37) and the radius statement.

Euler intervals decide \(c\in T_v\) in \(O(1)\) time.  There are \(n^2\)
open states.  Summed over all centers, the child transitions cost
\(n\sum_u|\operatorname{ch}(u)|=O(n^2)\); computing (35), tree distances,
and reconstruction has the same bound.  The table of \(F_u(c)\) dominates
memory.
\end{proof}

The restriction in Theorem~\ref{thm:centered-dp} is substantive: component
interiors must be subtrees of one fixed MST, and each nonroot component uses
one direct root attachment.  The theorem is neither an exact algorithm for
unrestricted RCRST nor an improvement of the universal factor in
Theorem~\ref{thm:height-partition}.

\subsection{Frozen production policies}
\label{app:mode-policies}

All schedules are deterministic functions of \(n\), \(k\), and measured
geometry; neither an instance name nor a stored baseline objective enters
dispatch.  The four public policies are the following.

\begin{description}
  \item[\textsc{certified}.]
    One canonical dense-Prim MST, the star, and height partitions at
    \[
      H\in\{\lfloor\Delta/4\rfloor,\lfloor\Delta/2\rfloor,
             \Delta,2\Delta,4\Delta\},
    \]
    with every positive threshold rounded up to at least \(1\).
    No perturbation, local search, beam, centered DP, or exponential
    enumeration is invoked.

  \item[\textsc{fast}.]
    The scalable schedule uses up to three Prim and three Kruskal tie
    variants, six PD/BRBC/KRY settings, six LAST settings, seven radial
    settings, four centered-MST caps through \(n=128\), cap repair through
    \(n=128\), and bounded repair through \(n=40\).  Above the applicable
    cutoffs, the corresponding families are skipped rather than timed out.

  \item[\textsc{balanced}.]
    For \(n\leq32\), this policy runs the centered primary quality branch with
    96 deterministic perturbation trials but disables the companion
    branch.  For \(n>32\), it uses the same scalable schedule as
    \textsc{fast}.  Exact labeled-tree enumeration via Pr\"ufer codes is
    therefore present in
    \textsc{balanced} for \(n\leq8\), because it belongs to the small primary
    quality engine.

  \item[\textsc{quality}.]
    The primary intensive engine uses 96 trials through \(n=32\), with nine
    LAST and twelve radial settings, critical MST heights, deeper reparent
    and component neighborhoods, and bounded cap beams.  It enumerates the
    exact labeled-tree frontier for \(n\leq8\), applies the centered DP through
    \(n=128\), and retains the richer cap schedule through \(n=256\).
    For \(26\leq n\leq30\), it unions a separately budgeted companion
    branch in which the centered construction family is disabled.
\end{description}

\paragraph{Budget terminology.}
One \emph{perturbation trial} selects a frontier seed, applies one to five
coordinate-seeded reparenting moves, and then performs improving component
exchanges.  A \emph{neighborhood} is the set of trees reachable by one stated
move.  A \(w\times d\) \emph{cap beam} keeps at most \(w\) feasible trees,
ordered by \((L,R,\mathrm{parent})\), at each of \(d\) successive
component-exchange rounds.

The companion branch uses only the ratio between the maximum and median edge
of the canonical MST.  A ratio at least \(8\) receives 96 clustered trials;
a ratio at least \(2\) receives 32 trials with the deeper companion
neighborhood; all other cases receive 32 trials with the shallower
neighborhood.  The primary and companion defaults are, respectively, 48 and
at most 48 reparent seeds, two and at most two \emph{additional} deep
reparent layers after the first two neighborhood expansions, 32 and at most
32 component seeds, two and at most two component rounds, and an
\(8\times2\) and at most \(8\times2\) cap beam.  The shallower companion
reduces these configurable budgets to 32 reparent seeds, one additional
layer, 24 component seeds, one round, and a \(6\times1\) beam.

For clarity, the retained size boundaries are collected here:
\[
\begin{array}{ll}
  \text{mechanism} & \text{production boundary}\\
  \text{complete labeled-tree frontier} & n\leq8\\
  \text{primary intensive search} & n\leq32\\
  \text{centered-disabled companion branch} & 26\leq n\leq30\\
  \text{centered fixed-MST DP} & n\leq128\\
  \text{quality cap schedule} & n\leq256\\
  \text{additional MST/BRBC families} & n\leq512
\end{array}                                                    \tag{38}
\]
The centered DP is evaluated at four caps
\[
  D=\Delta+H,\qquad
  H\in\{\lfloor0.7\Delta\rfloor,\lfloor0.75\Delta\rfloor,
          \lfloor0.8\Delta\rfloor,\Delta\},
\]
again replacing a zero positive-instance threshold by \(1\).

When \(k<64\), the scalable engine imports its cheaper endpoint schedule
before filtering, and the quality engine imports the scalable schedule.
Quality also imports the scalable schedule throughout the transition band
\(33\leq n\leq40\), even at \(k=64\).  These unions protect lower-cost
anchors when a small output cap changes which nondominated seeds survive;
they do not make the empirical policies approximation algorithms.

\subsection{General portfolio truncation}
\label{app:portfolio-selection}

Suppose the exact nondominated frontier is
\[
  T_0,T_1,\ldots,T_{m-1},
  \quad L(T_0)<\cdots<L(T_{m-1}),\quad
        R(T_0)>\cdots>R(T_{m-1}).
\]
If \(m\leq k\), it is returned unchanged.  Otherwise the empirical selector
reserves the two endpoints.  For each candidate marked essential by an
anchor or lower-policy import, it also reserves a frontier point that
dominates the marked candidate and minimizes exact additive slack.  If these
reservations themselves exceed \(k\), endpoints are retained first and the
remaining reserved indices are admitted in frontier order until the cap is
reached.  Thus unconditional witness preservation is a property of the
separate certified selector, not of an arbitrarily small empirical
portfolio.

For every consecutive pair \(T_a,T_b\) already selected, each unselected
\(T_i\), \(a<i<b\), receives the exact triangle score
\[
  A(a,i,b)=
  \left|
    (L_b-L_a)(R_i-R_a)-(L_i-L_a)(R_b-R_a)
  \right|.                                                     \tag{39}
\]
The selector repeatedly inserts the point with largest \(A\), breaking ties
by frontier index, until \(k\) points have been selected.  Equation (39) is a
deterministic shape-preserving downsampling rule; it is not a hypervolume
optimizer and carries no approximation claim.

\subsection{Claim and formalization boundaries}
\label{app:algorithm-boundaries}

The guarantees used in the paper have three distinct statuses.

\begin{itemize}
  \item The weak NP-completeness result and the height-partition inequalities
        are mathematical theorems proved in this paper.  The executable
        certificate checks are independent arithmetic audits of a produced
        witness, not machine-checked proofs of those theorems.
  \item The centered dynamic program is exact only for the fixed-MST family
        defined by (32)--(33).  Complete labeled-tree enumeration via
        Pr\"ufer codes is exact for unrestricted terminal trees only at the
        enforced cutoff \(n\leq8\).
  \item Only \textsc{certified}, on distinct signed-64-bit-coordinate
        terminals, has the universal \((2,2)\) guarantee.  Claims for
        \textsc{fast}, \textsc{balanced}, and \textsc{quality} are empirical
        claims about the released benchmark records.
\end{itemize}

The released Lean snapshot contains completed, \texttt{sorry}-free modules
for the finite tree model and elementary lower bounds, Manhattan parity, and
the gcd length lattice.  It does not contain completed formal proofs of the
NP-completeness reduction, the height-partition theorem, the metric lower
bound, or Theorem~\ref{thm:centered-dp}.  Accordingly, none of those results
is presented as formally verified.

%% file: appendix_experiments.tex
\section{Experimental Protocol and Complete Frozen Results}
\label{app:experiments}

This appendix fixes the meanings of every reported comparison and records the
complete aggregate results in the release bundle.  The machine-readable JSONL
files, rather than the rounded values below, are authoritative.

\subsection{Independent evaluator}

An instance consists of distinct integer-coordinate terminals
\(V=(v_0,\ldots,v_{n-1})\), with \(v_0\) as the root.  A solver output is a
parent array \(p\), where \(p_0=-1\).  Before using an output, the evaluator
checks the parent range, rejects self loops and cycles, verifies that every
chain reaches \(v_0\), and then recomputes
\[
  L(p)=\sum_{i=1}^{n-1}\|v_i-v_{p_i}\|_1,\qquad
  R(p)=\max_{i} \sum_{(u,w)\in p[v_0,v_i]}\|u-w\|_1
  \tag{D.1}
\]
using Python integers.  Values printed by the solver are never trusted.
Duplicate parent arrays and duplicate \((L,R)\) pairs are removed.  Sorting
the remaining pairs by increasing \(L\), a pair is retained exactly when its
\(R\) is strictly smaller than every earlier radius; this produces the exact
nondominated frontier \(\mathcal F(P)\).

For two portfolios \(P\) and \(Q\), define
\[
  \begin{aligned}
  P\succeq Q
  &\quad\Longleftrightarrow\quad
  \forall q\in\mathcal F(Q)\ \exists p\in\mathcal F(P):\\
  &\hspace{18mm}L(p)\leq L(q)\ \land\ R(p)\leq R(q).
  \end{aligned}
  \tag{D.2}
\]
The four reported relations are
\emph{dominance} (\(P\succeq Q\), \(Q\not\succeq P\)),
\emph{tie} (\(P\succeq Q\), \(Q\succeq P\)),
\emph{mixed} (neither), and
\emph{loss} (\(Q\succeq P\), \(P\not\succeq Q\)).
Thus ``tie'' means equality after Pareto closure, not equality of parent
arrays or generator tags.

Hypervolume is also fixed per instance.  If the largest coordinates in the
frozen reference frontier are \(L_{\max}\) and \(R_{\max}\), the common corner
is
\[
  \begin{aligned}
  C_L&=L_{\max}+\max\{1,\lfloor L_{\max}/20\rfloor\},\\
  C_R&=R_{\max}+\max\{1,\lfloor R_{\max}/20\rfloor\}.
  \end{aligned}
\]
For frontier points \((L_i,R_i)\) in increasing-\(L\) order, put
\(R_0=C_R\) and compute
\[
  \operatorname{HV}_C(P)
    =\sum_{i:\,L_i<C_L,\ R_i<R_{i-1}}
       (C_L-L_i)(R_{i-1}-R_i).
\]
The per-instance score is
\[
  \Delta\operatorname{HV}(P)
  =\frac{\operatorname{HV}_C(P)-\operatorname{HV}_C(P_{\rm ref})}
         {C_LC_R}.
\]
All area calculations before the final division are exact integers.  We
report both the sum over 28 instances and its mean; these two aggregations
must not be compared as though they were the same quantity.

\paragraph{Two different common factors.}
The empirical common factor of a whole portfolio is
\begin{equation}
 c_{\rm port}(I,P)=
   \min_{T\in P}\max\left\{\frac{L(T)}{M_I},
                           \frac{R(T)}{\Delta_I}\right\}.
 \label{eq:app-cport}
\end{equation}
By contrast, the certificate factor is attached to the particular
\(H=\Delta_I\) height-partition tree \(C_I\):
\begin{equation}
 c_{\rm cert}(I)=
   \max\left\{\frac{L(C_I)}{M_I},
              \frac{R(C_I)}{\Delta_I}\right\}\leq2.
 \label{eq:app-ccert}
\end{equation}
Equation~\eqref{eq:app-cport} may select the MST, star, or another height
threshold, and is therefore normally smaller than
\eqref{eq:app-ccert}.  If \(C_I\) is Pareto dominated, the
returned portfolio may retain an exactly remeasured dominator instead; the
proof obligation is still checked against the reconstructed \(C_I\).
No benchmark in this appendix is degenerate (\(M_I\Delta_I>0\)).

\subsection{Datasets and frozen reference}

The 28-instance development set contains 4--32 terminals: one smoke case, one
SALT toy case, four extracted Superblue nets, uniform and two-cluster random
families, two grids, and two near-line cases.  It was visible during
development and is used for regression and baseline comparison, not as a
blind generalization set.

The separate 50-instance set contains 6--100 terminals from uniform,
three-cluster, near-line, grid, comb, and radial families.  The scaling set
contains five geometries---uniform, four-cluster, near-line, comb, and
radial---at each
\(n\in\{64,128,256,512,1024,2048\}\), for 30 instances total.
Finally, exact validation uses three six-terminal instances
(\(6^4=1296\) labeled trees each) and one eight-terminal instance
(\(8^6=262144\) labeled trees).

The 28-case published-method reference is the nondominated union of three
sources:

\begin{enumerate}
  \item terminal-only star, RMST, PD, BRBC, and KRY outputs from the pinned
        SALT snapshot;
  \item the official PD-Rev/PD-II implementation at the 19 values
        \(\alpha_1=\alpha_2=k/20\), \(k=1,\ldots,19\), flipping distance two,
        with HVW/DAS disabled;
  \item the official 2023 MSPD/MSS implementation at the 11 settings
        \(\alpha=j/10\), \(j=0,\ldots,10\), and the 16 pairs
        \((\kappa,\lambda)\in\{1,\ldots,4\}^2\), exported immediately before
        HVW/DAS Steinerization.
\end{enumerate}

Legacy inputs are translated by a common per-instance offset before being
passed to the author programs; Manhattan distances are unchanged.  Every
author output is converted to a terminal parent array and independently
remeasured.  After Pareto closure, the published union contains 139 stored
points: 56 classic, 51 PD-II, and 32 MSPD/MSS points.  Environment~04 is not
part of this union; its outputs are used only in the separately labeled
internal-predecessor stress test below.

\subsection{Raw-data provenance}

The frozen exporter ran all modes with a portfolio limit of 64 and a
120-second per-invocation timeout.  The quality and held-out sets retain one
run per case and mode; scaling retains three runs for each of 30 cases and
four modes.  There is no discarded warm-up.  Wall time starts immediately
before process creation and ends after process reap and output flush;
per-process peak RSS is obtained from \texttt{wait4} when available.  Thread
environment variables are fixed to one, and the release build uses portable
\texttt{-O3 -DNDEBUG} settings without architecture-specific flags.

The result environment records source commit \texttt{bb4e95c0e08}; the
complete hash is retained in the machine-readable metadata.  The raw bundle contains
112 development records, 200 held-out records, 360 scaling records, 78
certificate reconstructions, four exact-oracle records, and 168 ablation
records.  Each invocation record includes the command, complete input,
stdout, stderr, exit status, timeout status, wall time, independently measured
tree metrics, and frontier.  Environment, hypervolume-corner, compiler, and
verification metadata are stored separately.  The CSV and Markdown summaries
are generated only from these raw records.

\subsection{Frozen quality results}

\begin{table*}[t]
  \caption{All four modes against the 28-case published-method union.}
  \label{tab:app-frozen-quality}
  \centering
  \small
  \begin{tabular}{lrrrrrr}
    \toprule
    Mode & Dominance & Tie & Mixed & Loss &
    \(\sum_I\Delta\mathrm{HV}\) &
    \(\operatorname{mean}_I\Delta\mathrm{HV}\)\\
    \midrule
    certified & 0  & 1 & 1 & 26 &
      \(-0.7478\) & \(-0.026708\)\\
    fast          & 12 & 6 & 9 & 1 &
      \(+0.0654\) & \(+0.002337\)\\
    balanced      & 23 & 5 & 0 & 0 &
      \(+0.0863\) & \(+0.003081\)\\
    quality       & 23 & 5 & 0 & 0 &
      \(+0.0863\) & \(+0.003081\)\\
    \bottomrule
  \end{tabular}
\end{table*}

Table~\ref{tab:app-frozen-quality} measures the entire frontier.  In
particular, the published union covers the certified portfolio on 26
instances, is mixed with it on one, and ties it on one; the factor-two theorem
does not imply frontier coverage.  Balanced and quality cover the union on
every instance and add at least one strict Pareto improvement on 23.

\subsection{Internal-predecessor stress test}
\label{app:internal-predecessor-stress}

As a separate diagnostic, we supplemented the published-method union with
the frozen output of Environment~04, an earlier solver produced within this
project.  This deliberately stronger stress-test union is neither an external
baseline nor a state-of-the-art claim, and it is not used for the main
comparisons above.

\begin{table*}[t]
  \caption{Diagnostic comparison with the published union augmented by the
  internal Environment-04 predecessor.}
  \label{tab:app-internal-predecessor}
  \centering
  \small
  \begin{tabular}{lrrrrr}
    \toprule
    Mode & Dominance & Tie & Mixed & Loss &
    \(\sum_I\Delta\mathrm{HV}\)\\
    \midrule
    certified & 0  & 1  & 0 & 27 & \(-0.6038\)\\
    fast          & 0  & 11 & 5 & 12 & \(-0.0127\)\\
    balanced      & 13 & 15 & 0 & 0  & \(+0.0050\)\\
    quality       & 13 & 15 & 0 & 0  & \(+0.0050\)\\
    \bottomrule
  \end{tabular}
\end{table*}

\begin{table*}[t]
  \caption{Common-factor statistics.  ``Certified portfolio'' uses
  Eq.~\eqref{eq:app-cport}; ``\(H=\Delta\) certificate'' uses
  Eq.~\eqref{eq:app-ccert}.}
  \label{tab:app-common-factors}
  \centering
  \scriptsize
  \setlength{\tabcolsep}{3.3pt}
  \begin{tabular}{llrrrr}
    \toprule
    Set & Object & Mean & Median & Min. & Max.\\
    \midrule
    development & certified portfolio
      & 1.21 & 1.24 & 1.00 & 1.53\\
    development & published portfolio
      & 1.06 & 1.05 & 1.00 & 1.18\\
    development & \(H=\Delta\) certificate
      & 1.27 & 1.25 & 1.00 & 1.94\\
    \midrule
    held-out & certified portfolio
      & 1.25 & 1.22 & 1.00 & 1.97\\
    held-out & \(H=\Delta\) certificate
      & 1.31 & 1.25 & 1.00 & 1.97\\
    \bottomrule
  \end{tabular}
\end{table*}

On the development set, the certified portfolio contains one to six
nondominated trees (median three); on the held-out set it contains one to
seven (median four).  The worst development portfolio factor, \(1.53\),
occurs on \texttt{uniform\_n30\_s2}.  Independent certificate reconstruction
finds no failure among all 78 development and held-out cases.  The returned
portfolio contains the original \(H=\Delta\) tree in 66 cases (23 development
and 43 held-out) and an exact Pareto dominator in 12 (5 and 7,
respectively).

\subsection{Held-out ordering and absolute time}

\begin{table*}[t]
  \caption{Coverage transitions on all 50 held-out instances.  A transition
  \(A\!\to\!B\) compares candidate \(B\) with reference \(A\).}
  \label{tab:app-heldout-coverage}
  \centering
  \small
  \begin{tabular}{lrrrr}
    \toprule
    Transition & Dominance & Tie & Mixed & Loss\\
    \midrule
    certified \(\to\) fast & 44 & 3  & 3 & 0\\
    fast \(\to\) balanced      & 21 & 29 & 0 & 0\\
    balanced \(\to\) quality   & 9  & 41 & 0 & 0\\
    \bottomrule
  \end{tabular}
\end{table*}

\begin{table*}[t]
  \caption{Held-out common factor and one-process wall time.  Times include
  startup and serialization; one run is retained per case.}
  \label{tab:app-heldout-time}
  \centering
  \small
  \begin{tabular}{lrrrrr}
    \toprule
    Mode & Mean \(c_{\rm port}\) & Median \(c_{\rm port}\) &
    Maximum \(c_{\rm port}\) & Median time (ms) & Maximum time (ms)\\
    \midrule
    certified & 1.25 & 1.22 & 1.97 & 6.9 & 13.7\\
    fast      & 1.11 & 1.08 & 1.44 & 13.0 & 262.0\\
    balanced  & 1.11 & 1.08 & 1.44 & 62.2 & 19030.5\\
    quality   & 1.11 & 1.08 & 1.44 & 63.3 & 19140.5\\
    \bottomrule
  \end{tabular}
\end{table*}

The large maxima for balanced and quality arise from their bounded intensive
small-instance tiers; they are retained rather than hidden by a cutoff.
Table~\ref{tab:app-heldout-coverage} is a portfolio statement, whereas the
first three numeric columns of Table~\ref{tab:app-heldout-time} summarize one
balanced point per instance.

\subsection{Scaling}

\begin{table*}[t]
  \caption{Median end-to-end wall time in milliseconds on the scaling set.
  Each entry is the median of five geometries and three retained runs.}
  \label{tab:app-scaling-all}
  \centering
  \small
  \begin{tabular}{rrrrr}
    \toprule
    \(n\) & certified & fast & balanced & quality\\
    \midrule
      64 &  6.1 &   8.7 &   8.7 &   9.8\\
     128 &  6.3 &  17.0 &  16.9 &  19.9\\
     256 &  6.1 &  15.3 &  15.6 &  26.2\\
     512 &  9.4 &  18.1 &  18.5 &  22.7\\
    1024 & 13.3 &  42.2 &  41.9 &  54.2\\
    2048 & 27.1 & 140.8 & 142.2 & 182.1\\
    \bottomrule
  \end{tabular}
\end{table*}

\begin{table*}[t]
  \caption{Complete \(n=2048\) endpoint.  Peak RSS is the maximum observed
  per-process resident set; tree count is the median.}
  \label{tab:app-scaling-endpoint}
  \centering
  \scriptsize
  \setlength{\tabcolsep}{3pt}
  \begin{tabular}{lrrrr}
    \toprule
    Mode & Median ms & Max. ms & Trees & Peak MiB\\
    \midrule
    certified & 27.1 &  32.5 & 5 & 2.48\\
    fast      & 140.8 & 182.8 & 8 & 3.66\\
    balanced  & 142.2 & 220.4 & 8 & 3.75\\
    quality   & 182.1 & 302.3 & 8 & 4.05\\
    \bottomrule
  \end{tabular}
\end{table*}

All 360 scaling invocations are valid and completed within the timeout.
These are candidate-only measurements; the scaling data do not contain
baseline runs.

\subsection{Exactness and component ablation}

Complete labeled-tree enumeration via Pr\"ufer codes gives exact frontiers
for all four small instances.  Fast, balanced, and quality tie the oracle on all four;
certified loses on all four, as expected for its deliberately sparse
construction family.  The production test suite has 26 unit and property
tests, and the frozen verification record reports zero quality-gate
regressions, zero certificate failures, and successful builds of both recent
baseline adaptations.

\begin{table*}[t]
  \caption{Complete compact ablation against the 28-case published-method
  union.  Runtime columns
  are from the same ablation run and are included to expose its variable host
  load; quality comparisons do not use them.}
  \label{tab:app-ablation-complete}
  \centering
  \scriptsize
  \setlength{\tabcolsep}{3.2pt}
  \begin{tabular}{lrrrrrrr}
    \toprule
    Variant & Dom. & Tie & Mix. & Loss &
    \(\sum\Delta\mathrm{HV}\) & Median ms & Max. ms\\
    \midrule
    production
      & 23 & 5 & 0 & 0 & \(+0.0863\) & 848.8 & 37991.6\\
    construction only
      & 1 & 9 & 8 & 10 & \(-0.0589\) & 16.5 & 602.3\\
    scalable tier only
      & 12 & 7 & 9 & 0 & \(+0.0665\) & 72.0 & 1008.9\\
    no deep search
      & 20 & 6 & 2 & 0 & \(+0.0840\) & 670.2 & 5981.2\\
    no component exchange
      & 23 & 5 & 0 & 0 & \(+0.0857\) & 706.4 & 30436.9\\
    reduced starts
      & 23 & 5 & 0 & 0 & \(+0.0862\) & 785.7 & 38378.8\\
    \bottomrule
  \end{tabular}
\end{table*}

The variants are compiled from the same source.  ``Construction only''
removes heuristic search; ``scalable tier only'' removes the intensive small-case
tier; the next two rows remove the named search stages; and ``reduced
starts'' lowers deterministic search trials and seed caps.  Because
constructions also seed later neighborhoods, these rows are interventions,
not additive attributions.

\subsection{Released-artifact boundary}

The public repository releases the solver source and tests; the three
benchmark JSONL files; the independent evaluator and result exporters; the
frozen reference parent arrays; redistributable source, licenses, and
provenance for the PD-II and MSPD/MSS adaptations; selected 01--05 prompts and
task contexts; the completed, \texttt{sorry}-free Lean Basic, Parity, and Gcd
modules; and the frozen raw outputs and generated summaries described above.

It does \emph{not} redistribute the pinned SALT source: only SALT's validated
terminal-tree contribution to the frozen union is present.  It also does not
claim to release an exhaustive agent transcript or hidden model state.
Incomplete Lean developments for NP-completeness, height partition, and the
approximation lower bound are omitted, so the main theorems are not claimed
as formally verified.  These boundaries are part of the reproducibility
statement, not exceptions to it.

%% file: appendix_record.tex
\section{Released Research Record and Formalization Boundary}
\label{app:record}

This appendix states exactly which parts of the model-assisted research record
are public.  It distinguishes a \emph{task specification}, which records what
an investigation was allowed to see and asked to do, from an execution
transcript, which would record everything that happened while carrying out that
task.  The repository releases the former for five conditions, but not the
latter.

\subsection{Model setting and five task specifications}

All five investigations used OpenAI GPT-5.6 Sol in Codex with ultra reasoning
effort.  They differed in their assigned objective, information boundary, and
minimum persistence requirement, not in the selected model or reasoning
setting.

Each directory under \texttt{prompts/} contains exactly three files:
\texttt{PROMPT.md}, \texttt{AGENTS.md}, and \texttt{PROBLEM.md}.  The first two
record the intended objective and access policy; the third gives the common
problem definition.  At a high level, the five conditions were:

\begin{description}
  \item[\texttt{01}: hardness, closed world.]
  The condition supplied the problem definition but no references, algorithms,
  benchmarks, or Internet access.  It asked for an adversarially audited
  NP-completeness proof, including the complete-Manhattan-graph shortcut audit,
  and required contemporaneous claim, approach, and counterexample ledgers.

  \item[\texttt{02}: exact tractability, blind.]
  This condition fixed the premise that a polynomial-time constructive route
  exists, prohibited hardness arguments and early abandonment, and supplied no
  papers or existing algorithms.  It allowed a public benchmark and a
  command-line evaluator only as falsification aids, forbade inspecting or
  modifying the evaluator, and imposed an eight-active-hour persistence
  protocol.

  \item[\texttt{03}: exact tractability, locally informed.]
  The mathematical premise and persistence requirement matched
  \texttt{02}, but this condition could consult a curated local
  \texttt{resources/} collection and supplied baseline code.  Network access
  and sibling experiments remained forbidden, and every transferred claim or
  code path required a provenance and model-match audit.  The contrast with
  \texttt{02} therefore isolates the effect of prior literature and code
  without changing the requested complexity conclusion.

  \item[\texttt{04}: approximation, literature open.]
  This condition took the hardness result from \texttt{01} as a fixed premise
  and redirected effort toward a strong bicriteria approximation and an
  improved empirical frontier.  Focused Internet search for public
  mathematical papers and implementations was permitted and source-logged;
  the frozen evaluator and supplied baselines remained immutable.  The prompt
  required at least eight active hours and continued post-pass work on the
  theorem, algorithm, runtime, and held-out behavior.

  \item[\texttt{05}: full-context engineering synthesis.]
  The final condition could inspect the accumulated evidence from
  \texttt{01}--\texttt{04}, the local reference and baseline collections, and
  relevant public online material.  It was asked to build and audit the
  strongest practical multi-mode solver, with exact output validation,
  controlled comparisons, ablations, and scaling measurements.  Its
  four-active-hour budget was a minimum rather than a deadline, and the prompt
  deliberately left the final algorithmic architecture open.
\end{description}

These descriptions concern the \emph{specified} information boundaries.  The
released prompt directories do not include the original \texttt{worklog/} or
\texttt{submission/} trees, evaluator-private state, downloaded reference
collections, integrity manifests, or filesystem images mentioned by those
specifications.  In particular, the presence of detailed ledger requirements
inside a prompt must not be read as a claim that the corresponding complete
ledger has been released.  The public packages document the experimental
conditions; they are neither full conversation histories nor containerized
replays of the five runs.

\subsection{What the public repository contains}

The released repository contains the final deterministic solver and its build
file; production, property, exact-small, and certificate tests; the checked-in
quality, held-out, and scaling instances; an independent tree/frontier checker
and result exporters; and the source and provenance material needed for the
released PD-II and 2023 MSPD/MSS terminal-output adaptations.  It also contains
the frozen nondominated baseline union, algorithm and evaluation notes, the five
task specifications above, the completed Lean snapshot described below, and
machine-readable raw results together with tables generated from those raw
records.

It does \emph{not} release complete model conversations, complete
condition-by-condition research histories, every intermediate claim ledger or
counterexample ledger, every exploratory program, private reference copies,
unfinished Lean developments, or complete execution-environment images.
Likewise, the frozen recent-baseline artifact has no timing data collected
under the solver's process-level timing protocol; it supports objective-space
comparisons, not a same-machine runtime ratio.  These omissions are why the
artifact is described as a reproducibility subset of the research record
rather than as ``everything produced'' during the project.

\subsection{Lean snapshot and theorem boundary}

The directory \texttt{formal/} contains three completed, \texttt{sorry}-free
modules:

\begin{itemize}
  \item \texttt{Rcrst/Basic.lean} defines finite metrics and rooted trees in the
  parent/depth model, total length, root radius, and direct radius.  It proves
  the pointwise direct-distance lower bound, \(\directradius\leq
  \treeradius(T)\), and that the star attains
  \(\treeradius=\directradius\); it also defines the MST optimality predicate.
  \item \texttt{Rcrst/Parity.lean} constructs the integer Manhattan metric and
  proves that every root-to-terminal tree-path length has the same parity as
  the corresponding direct Manhattan distance.
  \item \texttt{Rcrst/Gcd.lean} proves the length-lattice lemma: if \(g\)
  divides every pairwise distance, then \(g\) divides every tree length, with
  the associated exact budget-rounding lemma.
\end{itemize}

The weak NP-completeness reduction in Appendix~\ref{app:npc}, the
height-partition theorem in Appendix~\ref{app:height}, and the associated
lower-bound construction are \emph{not} completed in Lean and are not included
as Lean theorem files.  The main results of this paper are consequently
mathematical proofs checked in the ordinary manner, not claims of end-to-end
formal verification.  The Lean snapshot is also intentionally outside the
default solver build and experimental verification target.

\subsection{Frozen measurements and verification record}

The \texttt{results/raw/} bundle stores per-instance machine-readable outputs
for the 28-case comparison, 50 held-out cases, scaling runs through
\(n=2048\), certificate reconstruction, exact-small checks, and component
ablations.  Invocation records retain the command, input, standard output and
error, exit status, and wall time; scaling records additionally retain
per-process peak resident memory when the host exposes it.  The published CSV
and Markdown summaries are regenerated only from these raw records.

The frozen verification record separately captures the portable optimized
build, the 26-test suite, the public quality gate, and compilation of the two
redistributable recent-baseline source trees.  Its environment record includes
the source revision, compiler and flags, operating-system and processor
metadata, single-thread settings, timeout, timing boundary, and repeat policy.
Lean is explicitly outside that verification target, consistently with the
formalization boundary above.  Together these files make the reported
measurements inspectable and rerunnable without implying that an unreleased
research transcript has been reconstructed.